\documentclass[12pt]{article}
\usepackage{amsmath, amssymb, amsthm}
\usepackage{graphicx}
\usepackage{booktabs}
\usepackage{multirow}
\usepackage{url}
\usepackage{geometry}
\usepackage[round]{natbib} 
\usepackage[colorlinks=true, citecolor=blue, linkcolor=blue]{hyperref}
\usepackage{setspace}
\title{Adaptive Long-Run Variance Thresholding for Sparse Covariance Estimation in High-Dimensional Time Series}
\author{Wenhao Zhang and Zhaoxing Gao\thanks{Corresponding author: zhaoxing.gao@uestc.edu.cn (Z. Gao),  School of Mathematical Sciences, University of Electronic Science and Technology of China, Chengdu, 611731 P.R. China.}\\
{\normalsize School of Mathematical Sciences, University of Electronic Science and Technology of China}}
\date{}

\newtheorem{remark}{Remark}
\newtheorem{assumption}{Assumption}
\newtheorem{lemma}{Lemma}
\newtheorem{theorem}{Theorem}

\begin{document}

\maketitle

\begin{abstract}
\noindent Estimating a sparse covariance matrix is a fundamental problem in high-dimensional statistics. However, thresholding methods developed for independent data are generally not directly applicable to high-dimensional time series, where temporal dependence alters the stochastic behavior of sample covariance estimators. This paper studies sparse covariance matrix estimation for high-dimensional time series under weak dependence. We propose a thresholding procedure that incorporates long-run variance into the construction of entry-specific thresholds, thereby adapting to temporal dependence. Under suitable regularity conditions, we show that the proposed estimator is consistent under the spectral norm and attains the optimal convergence rate over a class of sparse covariance matrices. We further establish support recovery consistency for identifying the nonzero entries of the covariance matrix. In addition, we show that universal and adaptive thresholding methods developed for independent data may fail to recover the support consistently in the presence of autocorrelation. Simulation studies demonstrate that the proposed method compares favorably with existing thresholding estimators in terms of both estimation accuracy and support recovery. Applications to gene expression data and stock return data further illustrate its practical usefulness.

\end{abstract}
\noindent\textbf{Keywords:} Sparse Covariance Estimation; High-Dimensional Time Series; Weak Dependence; Long-Run Variance; Adaptive Thresholding
\newpage

\section{Introduction}

Covariance matrix estimation plays a fundamental role in many fields, including portfolio construction, gene association analysis and statistical inference. In modern statistical applications, large-scale datasets are increasingly common. However, when the dimension \(p\) is comparable to or larger than the sample size \(n\), the sample covariance matrix may fail to be consistent under the spectral norm. To address this issue, researchers often impose structural assumptions on the model. One common assumption is that the multivariate time series follows a low-factor structure. For example, asset returns in finance are frequently modeled as depending on a small number of latent factors; see \citet{1976Ross}, \citet{1989Stock,1998Stock}, and \citet{1993Fama,2014Fama}. Under this framework, the covariance matrix can be estimated through the covariance structure of the common factors; see \citet{2008Fan}. For large-scale data, factor models not only help extract useful common information but also reveal the underlying low-dimensional structure in high-dimensional settings. The key idea is to retain the dominant factors while removing low-impact or negligible noise components, thereby reducing the effective dimensionality. A variety of covariance matrix estimation methods based on factor models have therefore been developed for high-dimensional multivariate time series; see \citet{2015Xu} and \citet{2018Fan}. Alternative approaches for covariance matrix estimation in high-dimensional time series have also been proposed; see \citet{2016Malioutov} and \citet{2024Yang}. 

Another approach is to impose sparsity on the covariance matrix. In real-world high-dimensional data, correlations between variables are often related to their proximity; when the distance between variables is large, such correlations tend to vanish. For example, this phenomenon is commonly observed in financial markets, particularly in stock returns. When attention is restricted to a specific industry, stock returns are often highly correlated within the industry, whereas cross-industry covariances are typically negligible. In portfolio construction, portfolios composed of uncorrelated assets (e.g., gold and technology stocks) tend to produce covariance matrices with many zero entries. More generally, some variables may exhibit little to no correlation. For example, in biological systems, protein–protein interaction networks are often localized, leading to an overall sparse network. Finally, in sensor networks, environmental sensors located far apart often collect data (e.g., temperature and humidity) independently, with correlations occurring primarily among nearby sensors. Therefore, it is essential to develop estimation methods that can accurately identify the zero entries of the covariance matrix.

A basic approach to this problem is to begin with the sample covariance matrix and apply thresholding to each entry in order to remove small estimated values. Sparse covariance matrix estimation has received considerable attention in the literature. \citet{2008Bickel} proposed thresholding-based estimators for the sample covariance matrix $ \boldsymbol{\widehat{\Sigma}}_y $ and established their convergence rates. \citet{2012Cai,2013Cai} established the minimax convergence rates under both the matrix $ L_1 $ norm and the spectral norm. \citet{2011Cai} proposed adaptive thresholding methods and showed that these methods achieve the optimal convergence rate. 

The general thresholding rule was originally proposed by \citet{1994Donoho,1998Donoho} for estimating sparse normal mean vectors in wavelet analysis. A key assumption in this framework is that all noise components share the same variance. Under this assumption, the asymptotic properties of the method can be established. \citet{2011Cai} extended this line of research by showing that sparse covariance matrix estimation is inherently heteroscedastic and that the optimality of general thresholding methods is restricted to a limited parameter space associated with homoscedastic settings. They further introduced a more general parameter space that relaxes the variance constraints while encompassing the original setting. Within this framework, they established the asymptotic optimality of their method and highlighted the suboptimality of general thresholding approaches. They also proposed a data-driven adaptive thresholding procedure. However, most existing methods rely on the assumption of independent and identically distributed data and are therefore not directly applicable to high-dimensional time series.

This paper studies the estimation of sparse covariance matrices in high-dimensional time series. To motivate our approach, we briefly review the adaptive thresholding framework of \citet{2011Cai}, on which our method is built. Interested readers are referred to their original work for further details. \citet{2011Cai} first investigated the estimation of a sparse mean vector in a Gaussian setting with coordinate-specific noise levels. In such a framework, accurate recovery of each mean component requires the corresponding signal to be sufficiently large relative to its own noise level. Under sparsity, universal thresholding can perform well, but it effectively treats a heteroscedastic problem as if all coordinates shared the same noise level. In scenarios where the noise levels across different entries vary significantly, the efficacy of this universal thresholding method is substantially compromised.  A similar issue arises in sparse covariance matrix estimation. Since each column of the covariance matrix can be viewed as a sparse vector, thresholding methods can be applied naturally in this setting. In particular, each sample covariance entry may be regarded as the corresponding population quantity plus a stochastic error whose magnitude depends on its own variance. The main idea of \citet{2011Cai} is to replace a universal threshold by entry-dependent thresholds constructed from estimated entrywise variances. This yields a procedure that adapts to heterogeneous variability across covariance entries. The asymptotic properties of the resulting estimators have been established under suitable regularity conditions, although several issues remain open.

\begin{itemize}
    \item Both the universal thresholding rule and the adaptive thresholding rule are based on the classical central limit theorem and may no longer be valid for time series with autocorrelation.
\end{itemize}

\begin{itemize}
    \item In the presence of autocorrelation, the variance used in adaptive thresholding is no longer the same as the variance appearing in the central limit theorem for the entries of the sample covariance matrix.
\end{itemize}

The primary objective of this paper is to address these issues arising from dependence in high-dimensional time series. We establish conditions under which the central limit theorem holds for high-dimensional time series and incorporate the long-run variance into the thresholding estimator. Specifically, we follow the adaptive thresholding framework of \citet{2011Cai}, but replace the conventional variance estimator with a long-run variance estimator. This modification allows the estimator to account for the cumulative effects of temporal dependence.

Note that \citet{2013Jiang} studied the recovery of the support of nonzero entries in weakly dependent time series with autocorrelation, but did not directly address covariance matrix estimation. Instead, the focus was placed on the correlation matrix, thereby indirectly circumventing the variance issue. In contrast, under weak dependence, this paper directly investigates support recovery for the covariance matrix and establishes strong recovery performance.

We evaluate the performance of the proposed method through finite-sample simulation studies and compare it with existing methods. The results show that, in high-dimensional settings with small sample sizes, the proposed method performs well, whereas existing methods may misclassify zero entries as nonzero. We further apply the proposed method to real data, where it yields covariance matrix estimates with good interpretability and strong empirical performance.

The main contributions of this paper are threefold. First, the proposed thresholding method accommodates temporal dependence and provides accurate covariance estimation for high-dimensional time series. Second, we establish that the proposed estimator achieves the optimal convergence rate and ensures accurate support recovery. Third, we prove that universal and adaptive thresholding methods developed for independent data are generally not support-recovery consistent under autocorrelated dependence.

The rest of the paper is organized as follows. Section \ref{sec2} presents the proposed model and estimation methodology. Section \ref{sec3} establishes the theoretical properties of the proposed estimator. Section \ref{sec4} evaluates the performance of the proposed method using simulated datasets. Section \ref{sec5} evaluates the performance of the proposed method using real datasets. Section \ref{sec6} concludes with a discussion. All technical proofs are provided in the supplement. Throughout the paper, we use the following notation. For a matrix $\mathbf{A} = (a_{i,j})$, $||\mathbf{A}||_2 = \sqrt{\lambda_{\max}(\mathbf{A}^\prime \mathbf{A})}$ denotes the operator norm, where $\lambda_{\max}(\cdot)$ denotes the largest eigenvalue of a matrix and the superscript $\prime$ denotes the transpose of a vector or matrix. Furthermore, $||\mathbf{A}||_{1} = \max_{1 \leq j \leq p} \sum_{i=1}^p |a_{i,j}|$ denotes the $L_1$ norm, $||\mathbf{A}||_\infty = \max_{1 \leq i \leq p} \sum_{j=1}^p |a_{i,j}|$ denotes the $\ell_\infty$ norm, and $||\mathbf{A}||_F = \sqrt{\sum_{i,j=1}^p a_{i,j}^2}$ denotes the Frobenius norm. For two sequences of real numbers ${a_n}$ and ${b_n}$, we write $a_n = O(b_n)$ if there exists a constant $C$ such that $|a_n| \leq C|b_n|$ for all sufficiently large $n$, and we write $a_n = o(b_n)$ if $\lim_{n \to \infty} a_n/b_n = 0$. We also use the notation $a \asymp b$ to mean that $a = O(b)$ and $b = O(a)$.

\section{Methodology}\label{sec2}
\subsection{Setting and Method}
Consider an observable and fourth-order stationary $p \times 1$ time series $\boldsymbol{y}_t = (y_{t1}, y_{t2}, \ldots, y_{tp})^\prime$ with sparse covariance matrix $\boldsymbol{\Sigma}_y$ and mean vector $\boldsymbol{\mu} = (\mu_1, \ldots, \mu_p)^\prime$. Our objective is to estimate $\boldsymbol{\Sigma}_y$. In the independent setting, \citet{2008Bickel} studied the universal thresholding estimator, whereas \citet{2011Cai} developed an adaptive thresholding estimator. For details on thresholding estimation in the independent case, readers are referred to \citet{2011Cai}. We assume that the weakly dependent process ${\boldsymbol{y}_t}$ is $\alpha$-mixing, with mixing coefficients $\alpha_p(k) \to 0$ as $k \to \infty$, where
\begin{equation}  \label{eq1}
    \alpha_p(k)=\sup_i\sup_{A\in \mathcal{F}_{-\infty}^iB\in\mathcal{F}_{i+k}^{+\infty}}|P(A\cap B)-P(A)P(B)|, 
\end{equation}
and $\mathcal{F}_i^j$ denotes the $\sigma$-field generated by ${\boldsymbol{y}_t : i \leq t \leq j}$. Under the moment and mixing conditions specified below, for each fixed pair \((i,j)\),
\[
\frac{
\sqrt n
\left[
n^{-1}\sum_{t=1}^n (y_{it}-\mu_i)(y_{jt}-\mu_j)-\sigma_{ij}
\right]
}{
\sqrt{\theta_{ij}}
}
\Rightarrow N(0,1),
\]
where
\[
\theta_{ij}
=
\lim_{n\to\infty}
\operatorname{Var}
\left(
\frac{1}{\sqrt n}
\sum_{t=1}^n
(y_{it}-\mu_i)(y_{jt}-\mu_j)
\right).
\]

Since the data can be centered by subtracting their sample means, we assume without loss of generality that $\boldsymbol{\mu}=0$. It then follows that
\(\theta_{ij}=\lim_{n\to\infty}\mathrm{Var}\!\left(n^{-1/2}\sum_{t=1}^n y_{it}y_{jt}\right)
\). If $\theta_{ij}$ were known, then the thresholding estimator would be $(\sigma_{ij}^*)_{p \times p}$, where \[\sigma^*_{ij}=s_{\lambda_{ij}}(\hat{\sigma}_{ij}) \quad \text{with}\quad\lambda_{ij}=\delta\sqrt{\frac{\theta_{ij}\log p}{n}},\]and $s_\lambda(z)$ is a thresholding function. In the absence of independence, $\theta_{ij}$ captures the long-run effect of temporal dependence. 

In this paper we consider a class of thresholding functions $s_\lambda(z)$ that satisfy the following conditions:
\begin{enumerate}
    \item \( |s_{\lambda}(z)| \leq |z| \);
    \item \( s_{\lambda}(z) = 0 \) for \( |z| \leq \lambda \);
    \item \( |s_{\lambda}(z) - z| \leq \lambda \).
\end{enumerate}
Condition (1) ensures shrinkage and stabilizes the estimator, condition (2) enforces sparsity, and condition (3) controls the bias. These three conditions are satisfied by the hard-thresholding function $s_\lambda(z) = zI(|z| \geq \lambda)$, the soft-thresholding function $s_\lambda(z) = \mathrm{sgn}(z)(|z|-\lambda)_+$, and the adaptive lasso thresholding function $s_\lambda(z) = z(1-|\lambda/z|^\eta)_+$ with $\eta \geq 1$ and with the convention \(s_\lambda(0)=0\).

\subsection{Estimation}\label{sec22}
Since the estimation of $\boldsymbol{\Sigma}_y$ relies on the long-run variance $\theta_{ij}$, which is unknown in practice, accurate estimation of $\theta_{ij}$ is essential for the proposed inference procedure. Various approaches to long-run variance estimation have been proposed, including the kernel-based estimator of \citet{1991Andrews} and the moving block bootstrap estimator of \citet{2013Lahiri}; see also \citet{1997DenHaan} and \citet{2000Kiefer}. In this paper, we employ a kernel-based estimator for the long-run variance of $z_{ij,t}=y_{it}y_{jt}$:
\begin{equation}  \label{eq3}
    \hat{\theta}_{ij}=\sum_{k=-n+1}^{n-1}\mathcal{K}(\frac{k}{b_n})\hat{\Gamma}_{ij}(k),
\end{equation}
where
\begin{equation}  \label{eq4}
\hat{\Gamma}_{ij}(k)=\begin{cases}
    \frac{1}{n}\sum_{t=k+1}^n[z_{ij,t}-\bar{z}_{ij}][z_{ij,t-k}-\bar{z}_{ij}]\quad \text{if $k\geq0$},\\ \frac{1}{n}\sum_{t=1-k}^n[z_{ij,t+k}-\bar{z}_{ij}][z_{ij,t}-\bar{z}_{ij}]\quad \text{if $k<0$},   
\end{cases}
\end{equation}
with $\bar{z}_{ij}=\frac{1}{n}\sum_{t=1}^n z_{ij,t}$. Here, $K(x)$ is a symmetric function that is continuous at $0$ and satisfies $K(0)=1$, and $b_n$ is a bandwidth parameter that depends on $(i,j)$ and diverges as $n \to \infty$. We omit the subscript $(i,j)$ from $b_n$ for simplicity, since $\hat{\theta}_{ij}$ is a consistent estimator of $\theta_{ij}$ for all $(i,j)$ provided that the bandwidth $b_n$ is chosen at an appropriate rate. Among the many kernel functions that ensure the positive definiteness of long-run variance estimators, \citet{1991Andrews} derived the quadratic spectral kernel
\begin{equation}  \label{eq5}
    \mathcal{K}_{QS}(x)=\frac{25}{12\pi^2x^2}\{\frac{sin(6\pi x/5)}{6\pi x/5}-cos(6\pi x/5) \}
\end{equation}
by minimizing the asymptotic truncated mean squared error of the estimator. \citet{1991Andrews} also studied a data-driven bandwidth selection procedure for the quadratic spectral kernel in Section 6. In the numerical studies in Sections \ref{sec4} and \ref{sec5}, we use kernel functions with clearly specified bandwidth selection procedures. The theoretical results in Lemma 2 apply to general kernel functions.

Using the estimated long-run variance, we construct threshold values and apply thresholding to each entry of the sample covariance matrix $\boldsymbol{\widehat{\Sigma}}_y$, thereby obtaining the threshold estimator $\boldsymbol{\widehat{\Sigma}}_y^*$. The consistency of $\boldsymbol{\widehat{\Sigma}}_y^*$ is established in Theorem \ref{the1} of Section \ref{sec3}.

\section{Theoretical Properties}\label{sec3}
In this section, we show that, under suitable regularity conditions, the estimator $\boldsymbol{\widehat{\Sigma}}_y^*$ is consistent for $\boldsymbol{\Sigma}_y$. As discussed in Section \ref{sec2}, the estimators $\hat{\theta}_{ij}$ of $\theta_{ij}$ play a central role, and our goal is to show that $\hat{\theta}_{ij}$ is a valid estimator of $\theta_{ij}$. Without loss of generality, we assume that $\boldsymbol{\mu} = 0$. Before proceeding, we introduce some notation. The covariance matrix is defined as $\boldsymbol{\Sigma}_y=(\sigma_{ij}),$ where $\sigma_{ij} = E(z_{ij,t})$.

The sample covariance matrix and the thresholding estimators are denoted by \[\boldsymbol{\widehat{\Sigma}}_{y}=(\hat{\sigma}_{ij}),\quad \boldsymbol{\widehat{\Sigma}}^*_y=(\hat{\sigma}^*_{ij}), \quad \boldsymbol{\widehat{\Sigma}}_y^g=(\hat{\sigma}_{ij}^g), \quad \text{and} \quad \boldsymbol{\widehat{\Sigma}}_y^c=(\hat{\sigma}_{ij}^c), \]
where \[\hat{\sigma}_{ij}=\frac{1}{n}\sum_{t=1}^n(y_{it}-\bar{y}_i)(y_{jt}-\bar{y}_j),\quad \hat{\sigma}^*_{ij}=s_{\lambda_{ij}}(\hat{\sigma}_{ij}),\quad \hat{\sigma}^g_{ij}=s_{\lambda^g_{ij}}(\hat{\sigma}_{ij}), \quad\text{and}\quad \hat{\sigma}^c_{ij}=s_{\lambda_{ij}^c}(\hat{\sigma}_{ij}) ,\] with \[\lambda_{ij}=\delta\sqrt{\frac{\hat{\theta}_{ij}\log p}{n}},\quad \lambda_{ij}^g=\delta\sqrt{\frac{\log p}{n}}, \quad\text{and}\quad \lambda_{ij}^c=\delta\sqrt{\frac{\hat{\theta}^c_{ij}\log p}{n}},\] where \[\hat{\theta}_{ij}=\sum_{k=-n+1}^{n-1}\mathcal{K}(\frac{k}{b_n})\hat{\Gamma}_{ij}(k), \quad \hat{\theta}_{ij}^c=\frac{1}{n} \sum_{t=1}^{n} [(y_{it} - \bar{y}_i)(y_{jt} - \bar{y}_j)-\hat{\sigma}_{ij}]^2. \] For simplicity, we omit the subscript $n$, although all of these estimators depend on the sample size $n$.

\subsection{Asymptotic properties when \texorpdfstring{$n \to \infty$}{n to infinity} and \texorpdfstring{$p \to \infty$}{p to infinity}} \label{sec31}
In order to deal with large $p$, we first impose a sparsity condition on the  matrix $\boldsymbol{\Sigma}_y$.
\begin{assumption}
\label{ass1}
For $\boldsymbol{\Sigma}_y=(\sigma_{ij})$, we assume that $\max_{1\leq i \leq p}\sum_{j=1}^p|\sigma_{ij}|^{\iota}\leq s_1$,  for some constant $\iota\in[0,1)$, where $s_1$  may diverge together with $p$. When \(\iota=0\), we use the convention that \(|0|^0=0\), so that the condition counts the number of nonzero entries in each row. We define this class $\boldsymbol{\Sigma}_y$ as 
\[
\begin{aligned}
\mathcal{U}_\iota &:= \mathcal{U}_\iota(s_1) = \left\{ \boldsymbol{\Sigma} : \boldsymbol{\Sigma} \succ 0, \max_i \sum_{j=1}^p |\sigma_{ij}|^\iota \le s_1 \right\}.
\end{aligned}
\]
\end{assumption}

\begin{assumption}
\label{ass2}
There exist $\kappa_1$ and $\kappa_2$ such that $0< \kappa_1\leq \lambda_{\min}(\boldsymbol{\Sigma}_y)\leq \lambda_{\max}(\boldsymbol{\Sigma}_y)\leq \kappa_2$, where \(\kappa_1\) is bounded away from zero and \(\kappa_2\) may depend on \(n\) and \(p\).
\end{assumption}

Unlike the sparsity conditions for the i.i.d. data, in which researchers often put direct structure on the observations, it is not appropriate to do so for dependent data. 

\begin{assumption}
\label{ass3}
There exist constants $C_L>0$, $C_U>0$, $K_1>0$, $K_2>1$ and $\gamma_1\in(0,2]$, such that (i) $\max_{1\leq j\leq p}E\{\exp(K_1|y_{jt}|^{\gamma_1})\}\leq K_2$ and (ii) $0<C_L\leq \theta_{ij} \leq C_U$ for all sufficiently large n.
\end{assumption}

\begin{assumption}
\label{ass4}
For all $k\geq1$, $\sup_p\alpha_p(k)\leq \exp(-K_3k^{\gamma_2})$, where $K_3>0$ and $\gamma_2\in (0,1]$ are constants.
\end{assumption}

 Assumption \ref{ass3}(i) ensures exponentially decaying tails for the statistics considered below, making the analysis applicable to ultra-high-dimensional settings. For sufficient conditions under which Assumption \ref{ass4} holds, see the proof of Lemma 1 in \citet{2019Gao}, which provides a theoretical basis for heteroscedastic autocorrelated models. The restrictions $\gamma_1 \in (0,2]$ and $\gamma_2 \in (0,1]$ are imposed only for the subsequent theoretical analysis. Theorem \ref{the1} below applies when the dimension $p$ satisfies
\begin{equation}  \label{eq7}
    \log p=o(\min(\frac{n}{b_n^2}),n^{\rho_3/(2-\rho_3)}),
\end{equation}
where $\rho_3^{-1}=4\gamma_1^{-1}+\gamma_2^{-1}$.

\begin{assumption}
\label{ass5}
$\mathcal{K}(x)$ is a symmetric kernel function which is continuous at 0 and satisfying $\mathcal{K}(0)=1$, $|\mathcal{K}(x)|\leq1$ for all $x\in R$, and $\int_{-\infty}^{+\infty}|\mathcal{K}(x)|dx\leq C_7<\infty$ for some constant $C_7>0$.
\end{assumption}

Assumption \ref{ass5} includes many commonly used kernel functions, such as the Bartlett, Parzen, Tukey--Hanning, and QS kernels; see \citet{1991Andrews} for details. 

Combining the uniform convergence of the sample covariance entries in Lemma 1 with the uniform consistency of the long-run variance estimators in Lemma 2 in the supplement, we obtain the spectral-norm convergence rate.

\begin{theorem} \label{the1}
    Let Assumptions \ref{ass1}-\ref{ass5} hold. If $\log p=o(\min(\frac{n}{b_n^2}),n^{\rho_3/(2-\rho_3)})$ with a bandwidth $b_n^2=o(n)$, then \[||\boldsymbol{\widehat{\Sigma}}_y^*-\boldsymbol{\Sigma}_y||_2=O_p\{s_1(\frac{\log p}{n})^{\frac{1-\iota}{2}}\},\]
    for a sufficiently large $\delta$, where $s_1$ is defined in Assumption \ref{ass1}.
\end{theorem}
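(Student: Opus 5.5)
The argument follows the Bickel--Levina / Cai--Liu template. The work is done on a high-probability event on which every sample covariance entry is close to its population value and every estimated long-run variance is close to its target.

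\textbf{Step 1 (the good event).} From Lemma 1 in the supplement, under Assumptions \ref{ass3}--\ref{ass4} and $\log p=o(n^{\rho_3/(2-\rho_3)})$, we take a Bernstein-type inequality for $\alpha$-mixing sequences with stretched-exponential tails. Since $z_{ij,t}=y_{it}y_{jt}$ has tail index $\gamma_1/2$, this gives
\[
\max_{i,j}|\hat\sigma_{ij}-\sigma_{ij}|=O_p\bigl(\sqrt{\log p/n}\bigr).
\]
The sharper self-normalised version we need is
\[
P\Bigl(\max_{i,j}|\hat\sigma_{ij}-\sigma_{ij}|/\sqrt{\theta_{ij}}>\delta_0\sqrt{\log p/n}\Bigr)\to 0
\]
for a suitable constant $\delta_0$. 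The centring by sample means contributes $\bar y_i\bar y_j=O_p(\log p/n)$ uniformly, which is negligible.

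From Lemma 2 we take $\max_{i,j}|\hat\theta_{ij}-\theta_{ij}|=o_p(1)$. This is where $\log p=o(n/b_n^2)$ and $b_n^2=o(n)$ enter: the variance part of the kernel estimator scales like $b_n\sqrt{\log p/n}$, and the bias vanishes because the autocovariances of $z_{ij,t}$ are summable (by mixing and Davydov's inequality) and $\mathcal K$ is continuous at $0$ with $\mathcal K(0)=1$.

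Combining these with $C_L\le\theta_{ij}\le C_U$, on an event $\mathcal A_n$ with $P(\mathcal A_n)\to1$ we have, for all $(i,j)$,
\[
\tfrac12\theta_{ij}\le\hat\theta_{ij}\le 2\theta_{ij}.
\]
Hence
\[
\lambda_{ij}\asymp\lambda:=\sqrt{\log p/n}
\qquad\text{and}\qquad
|\hat\sigma_{ij}-\sigma_{ij}|\le\tfrac12\lambda_{ij},
\]
provided $\delta$ is large, namely $\delta\ge 2\sqrt2\,\delta_0$.

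\textbf{Step 2 (deterministic bound on $\mathcal A_n$).} Since $\hat{\boldsymbol\Sigma}^*_y-\boldsymbol\Sigma_y$ is symmetric, $\|\cdot\|_2\le\|\cdot\|_1=\max_i\sum_j|\hat\sigma^*_{ij}-\sigma_{ij}|$. For a fixed row $i$, split the indices according to whether $|\hat\sigma_{ij}|\le\lambda_{ij}$.

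\emph{Case $|\hat\sigma_{ij}|\le\lambda_{ij}$.} Then $\hat\sigma^*_{ij}=0$ and $|\sigma_{ij}|\le\tfrac32\lambda_{ij}\le c\lambda$. The error is $|\sigma_{ij}|\le |\sigma_{ij}|^\iota (c\lambda)^{1-\iota}$.

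\emph{Case $|\hat\sigma_{ij}|>\lambda_{ij}$.} Condition (3) on $s_\lambda$ gives $|\hat\sigma^*_{ij}-\sigma_{ij}|\le\lambda_{ij}+|\hat\sigma_{ij}-\sigma_{ij}|\le\tfrac32\lambda_{ij}\le c\lambda$. On this set $|\sigma_{ij}|\ge\tfrac12\lambda_{ij}\ge c'\lambda$, so $1\le |\sigma_{ij}|^\iota(c'\lambda)^{-\iota}$ and the error is at most $c\lambda^{1-\iota}|\sigma_{ij}|^\iota/c'^{\iota}$.

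Summing over $j$ and using Assumption \ref{ass1} gives
\[
\max_i\sum_j|\hat\sigma^*_{ij}-\sigma_{ij}|\le C s_1\lambda^{1-\iota}=C s_1(\log p/n)^{(1-\iota)/2}
\]
on $\mathcal A_n$. This proves the theorem. The case $\iota=0$ is identical, with the count of nonzero entries replacing $\sum_j|\sigma_{ij}|^\iota$. Assumption \ref{ass2} is not needed beyond ensuring that $\boldsymbol\Sigma_y\in\mathcal U_\iota$.

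\textbf{Main obstacle.} The crux is the uniform-in-$(i,j)$ control in Step 1, and in particular Lemma 2's uniform consistency of $\hat\theta_{ij}$ over $p^2$ pairs. The kernel sum involves about $b_n$ lagged sample autocovariances $\hat\Gamma_{ij}(k)$, each a mixing average of products of four $y$'s with tail index $\gamma_1/4$. The plan is to apply the mixing Bernstein inequality (as in Merlevède--Peligrad--Rio) to each $\hat\Gamma_{ij}(k)-\Gamma_{ij}(k)$ with a union bound over $p^2 b_n$ terms. Bounding the kernel weights by $|\mathcal K|\le1$ then gives an error of order $b_n\sqrt{\log p/n}=o(1)$ under \eqref{eq7}. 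This accounts for the exponent $\rho_3$ built from $4\gamma_1^{-1}+\gamma_2^{-1}$. The bias term is controlled separately by dominated convergence using $\sum_k|\Gamma_{ij}(k)|<\infty$ uniformly, which follows from the geometric-type mixing decay and the uniform moment bounds.
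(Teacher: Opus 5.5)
Your argument is correct and is essentially the paper's. Both work on the high-probability event supplied by Lemmas 1 and 2 (uniform $O_p(\sqrt{\log p/n})$ entrywise error and $\max_{i,j}|\hat\theta_{ij}-\theta_{ij}|\le C_L/2$) and then bound row sums deterministically via $|\sigma_{ij}|\le|\sigma_{ij}|^{\iota}u_{ij}^{1-\iota}$ and $1\le|\sigma_{ij}|^{\iota}u_{ij}^{-\iota}$. Your two-case split on the event $\{|\hat\sigma_{ij}-\sigma_{ij}|\le\lambda_{ij}/2\ \forall i,j\}$ is a tidier version of the paper's detour through the thresholded population matrix $\boldsymbol{\Sigma}_y^*$ and the terms $I_{5,1},\ldots,I_{5,4}$.

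One minor caveat concerns your sketch of Lemma 2. Bounding $|\mathcal{K}|\le1$ over roughly $b_n$ lags works only for truncated kernels. For the QS kernel allowed by Assumption 5 you need, as the paper does, $\sum_k|\mathcal{K}(k/b_n)|\le Cb_n$ combined with a union bound over all $n$ lags.
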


\begin{remark}
    \citet{1991Andrews} systematically investigated the theoretical properties of the kernels and the optimal rates of different kernels are given in Section 6 therein. The optimal order
of $b_n$ is $n^{1/3}$ for the Bartlett kernel and $n^{1/5}$
for the Parzen, Tukey-Hanning and Quadratic spectral kernels, which all satisfy the condition in Theorem \ref{the1} above.
\end{remark}

\subsection{Support recovery} \label{sec32}
Recovery of the support of a sparse covariance matrix is also an important problem in sparse matrix estimation. Define the support of $\boldsymbol{\Sigma}_y = (\sigma_{ij})$ by  $\Psi=\{(i,j):\sigma_{ij}\neq0\}$.

\begin{theorem} \label{the2}
    Let Assumptions \ref{ass1}-\ref{ass5} hold. If $\log p=o(\min(\frac{n}{b_n^2}),n^{\rho_3/(2-\rho_3)})$ with a bandwidth $b_n^2=o(n)$, for a sufficiently large $\delta$, let 
    \begin{equation}  \label{eq8}
        |\sigma_{ij}|\geq(2+\eta)\delta\sqrt{\frac{\theta_{ij}\log p}{n}},
    \end{equation}
    for any $(i,j)\in\Psi$ and some $\eta>0$, we have 
    \begin{equation}  \label{eq9}
        P(\text{supp}(\boldsymbol{\widehat{\Sigma}}_y^*)=\text{supp}(\boldsymbol{\Sigma}_y))\to1.
    \end{equation}
\end{theorem}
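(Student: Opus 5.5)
The argument works on a single high-probability event $\mathcal{E}_n$ on which two uniform bounds hold. On $\mathcal{E}_n$, every zero entry of $\boldsymbol{\Sigma}_y$ is thresholded to zero, and every nonzero entry survives. The two ingredients are the ones used for Theorem \ref{the1}: the uniform deviation bound for the sample covariance entries (Lemma 1 in the supplement) and the uniform consistency of the long-run variance estimators (Lemma 2 in the supplement).

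\textbf{Step 1: the event.} For a small constant $\epsilon>0$, to be tied to $\eta$ later, define $\mathcal{E}_n$ as the intersection of two events:
\[
\max_{i,j}\frac{|\hat\sigma_{ij}-\sigma_{ij}|}{\sqrt{\theta_{ij}}}\le (1+\epsilon)\,\delta_0\sqrt{\frac{\log p}{n}}
\quad\text{and}\quad
\max_{i,j}\Bigl|\frac{\hat\theta_{ij}}{\theta_{ij}}-1\Bigr|\le \epsilon .
\]
Here $\delta_0$ is the constant for which Lemma 1 gives the first bound with probability tending to one. That lemma rests on Bernstein-type inequalities for $\alpha$-mixing sequences with sub-Weibull tails (Assumptions \ref{ass3}--\ref{ass4}), applied to $z_{ij,t}$, together with a union bound over the $p^2$ pairs. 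It is valid because $\log p=o(n^{\rho_3/(2-\rho_3)})$. The effect of centering by $\bar y_i$ is of smaller order $O_p(\log p/n)$.

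For the second bound, Lemma 2 gives $\max_{i,j}|\hat\theta_{ij}-\theta_{ij}|=o_p(1)$ under $\log p=o(n/b_n^2)$ and $b_n^2=o(n)$. Because $\theta_{ij}\ge C_L>0$, this converts into the ratio bound with probability tending to one. Hence $P(\mathcal{E}_n)\to1$. The constant $\delta$ is taken "sufficiently large", in the sense that $\delta(1-\epsilon)^{1/2}>(1+\epsilon)\delta_0$.

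\textbf{Step 2: zeros are killed.} Take $(i,j)\notin\Psi$, so $\sigma_{ij}=0$. On $\mathcal{E}_n$,
\[
|\hat\sigma_{ij}|\le(1+\epsilon)\delta_0\sqrt{\frac{\theta_{ij}\log p}{n}}<\delta(1-\epsilon)^{1/2}\sqrt{\frac{\theta_{ij}\log p}{n}}\le\delta\sqrt{\frac{\hat\theta_{ij}\log p}{n}}=\lambda_{ij}.
\]
By property (2) of $s_\lambda$, this gives $\hat\sigma^*_{ij}=0$.

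\textbf{Step 3: nonzeros survive.} Take $(i,j)\in\Psi$. Combining \eqref{eq8} with $\mathcal{E}_n$ and $\delta_0(1+\epsilon)\le\delta$,
\[
|\hat\sigma_{ij}|\ge|\sigma_{ij}|-|\hat\sigma_{ij}-\sigma_{ij}|\ge\bigl[(2+\eta)\delta-\delta\bigr]\sqrt{\frac{\theta_{ij}\log p}{n}}=(1+\eta)\delta\sqrt{\frac{\theta_{ij}\log p}{n}}.
\]
Choose $\epsilon$ small enough that $(1+\epsilon)^{1/2}<1+\eta$. Then $\lambda_{ij}\le\delta(1+\epsilon)^{1/2}\sqrt{\theta_{ij}\log p/n}<|\hat\sigma_{ij}|$, so $|\hat\sigma_{ij}|>\lambda_{ij}$ strictly.

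It remains to conclude $\hat\sigma^*_{ij}\neq0$. For the hard, soft and adaptive lasso rules this follows directly from $|\hat\sigma_{ij}|>\lambda_{ij}$. Properties (1)--(3) alone do not force $s_\lambda(z)\ne0$ when $|z|>\lambda$, however. I would therefore either restrict to these three rules, or impose the mild extra property $s_\lambda(z)\neq0$ for $|z|>\lambda$, which all three rules satisfy. Together with Step 2, this gives $\mathrm{supp}(\boldsymbol{\widehat{\Sigma}}_y^*)=\Psi$ on $\mathcal{E}_n$, and hence \eqref{eq9}.

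\textbf{Main obstacle.} The crux is Step 1: the uniform, self-normalized deviation bound with an explicit constant $\delta_0$ that can be compared with $\delta$. The relevant scale is $\theta_{ij}$, the long-run variance, not the marginal variance of $z_{ij,t}$. I would first try a Bernstein inequality for $\alpha$-mixing sub-Weibull sequences, as in Merlevède--Peligrad--Rio, whose leading variance term is $n\theta_{ij}(1+o(1))$ uniformly in $(i,j)$. The tail is then bounded by
\[
p^2\exp\{-(1+\epsilon)^2\delta_0^2\log p/(2+o(1))\}+p^2\exp\{-c(n\log p)^{\rho_3/2}\},
\]
where the second term plays the role of the $n^{\rho_3/(2-\rho_3)}$ rate condition. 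Taking $\delta_0>2$ makes the first term vanish. The delicate points are the uniformity in $(i,j)$ of the $(1+o(1))$ factor and the handling of the $(1-\epsilon)$ loss from $\hat\theta_{ij}$; both are absorbed by taking $\delta$ large.
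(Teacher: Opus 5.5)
Your proposal is correct and follows the paper's argument: one event built from Lemma 1 and Lemma 2, on which zeros fall below $\lambda_{ij}$ and nonzeros stay above it by the triangle inequality with the $(2+\eta)$ margin. Lemma 1 gives $\max_{i,j}|\hat\sigma_{ij}-\sigma_{ij}|=O_p(\sqrt{\log p/n})$, and together with $\theta_{ij}\ge C_L$ this already suffices once $\delta$ is large, so your "main obstacle" about a sharp self-normalized constant is not needed. Also, you do not need to restrict the thresholding rule: condition (3) already gives $|s_{\lambda}(z)|\ge |z|-\lambda>0$ whenever $|z|>\lambda$.
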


Similar support recovery results for correlation matrix estimation under weak dependence were established by \citet{2013Jiang} using adaptive thresholding. Note that Theorem \ref{the2} concerns estimation of the covariance matrix rather than the correlation matrix.

We can also assess support recovery performance using the true positive rate (TPR) and the false positive rate (FPR), defined respectively by 
\begin{equation}  \label{eq10}
    \text{TPR}=\frac{\#\{(i,j):\hat{\sigma}^*_{ij}\neq0 \text{ and }\sigma_{ij}\neq0\text{ and }(i\neq j)\}}{\#\{(i,j):\sigma_{ij}\neq0\text{ and }(i\neq j)\}},
\end{equation}
and
\begin{equation}  \label{eq11}
    \text{FPR}=\frac{\#\{(i,j):\hat{\sigma}^*_{ij}\neq0 \text{ and }\sigma_{ij}=0\text{ and }(i\neq j)\}}{\#\{(i,j):\sigma_{ij}=0\text{ and }(i\neq j)\}}.
\end{equation}
It follows directly from Theorem \ref{the2} that, under condition (\ref{eq8}), \[P(\text{TPR=1)}\to1\quad \text{and} \quad P(\text{FPR=0)}\to1.\]

\subsection{Choice of the parameter \texorpdfstring{$\delta$}{delta}} \label{sec33}
In Sections \ref{sec31} and \ref{sec32}, the theoretical properties are established under the threshold level $\delta \sqrt{\hat{\theta}_{ij}\log p/n}$, where $\delta$ is unknown in practice. When the dimension is large, the performance of the proposed method depends on the choice of the tuning parameter $\delta$. \citet{2008Bickel} proposed selecting the threshold by minimizing the Frobenius norm of the difference between the threshold estimator and the sample covariance matrix computed from independent data. \citet{2011Cai} proposed an adaptive thresholding method along the same lines. For high-dimensional dependent data, random sample splitting may destroy the
temporal ordering of the observations. We therefore use a block cross-validation procedure based on consecutive time blocks. Although we do not provide a theoretical justification for this data-driven procedure, our numerical studies show that it performs reasonably well.

Let \(B_1,\ldots,B_K\) be a partition of \(\{1,\ldots,n\}\) into \(K\) consecutive
blocks with approximately equal lengths. For the \(k\)-th split, the block \(B_k\)
is used as the validation set. To reduce the effect of temporal dependence between
the training and validation samples, we may remove a buffer of length \(\ell_n\)
around \(B_k\). Specifically, define
\[
\mathcal V_k = B_k,
\qquad
\mathcal T_k
=
\{1,\ldots,n\}\setminus
\{t:\operatorname{dist}(t,B_k)\le \ell_n\},
\]
where \(\ell_n\ge 0\). The choice \(\ell_n=0\) gives the ordinary block
cross-validation procedure.

For each candidate value \(\delta\), we compute the proposed thresholding estimator
using only the observations in \(\mathcal T_k\). Denote this estimator by
\[
\widehat\Sigma^{*}_{y,\mathcal T_k}(\delta)
=
\left(
s_{\lambda_{ij,k}(\delta)}
\left(\widehat\sigma_{ij,\mathcal T_k}\right)
\right)_{1\le i,j\le p},
\]
where
\[
\lambda_{ij,k}(\delta)
=
\delta
\sqrt{
\frac{\widehat\theta_{ij,\mathcal T_k}\log p}{|\mathcal T_k|}
}.
\]
Here \(\widehat\sigma_{ij,\mathcal T_k}\) and
\(\widehat\theta_{ij,\mathcal T_k}\) are computed from the training observations
indexed by \(\mathcal T_k\). When \(\mathcal T_k\) consists of several disjoint
time intervals, the lag-\(h\) autocovariances used in
\(\widehat\theta_{ij,\mathcal T_k}\) are computed only from pairs of observations
that both belong to \(\mathcal T_k\).

The validation covariance matrix is computed from the observations in \(B_k\):
\[
\boldsymbol{\widehat\Sigma}_{y,\mathcal V_k}
=
\frac{1}{|\mathcal V_k|}
\sum_{t\in \mathcal V_k}
(y_t-\bar y_{\mathcal V_k})(y_t-\bar y_{\mathcal V_k})',
\]
where $\bar y_{\mathcal V_k} = \frac{1}{|\mathcal V_k|} \sum_{t\in \mathcal V_k}y_t.$
The block cross-validation loss is defined by
\[
\widehat R(\delta)
=
\frac{1}{K}
\sum_{k=1}^{K}
\left\|
\boldsymbol{\widehat\Sigma}^{*}_{y,\mathcal T_k}(\delta)
-
\boldsymbol{\widehat\Sigma}_{y,\mathcal V_k}
\right\|_F^2 .
\]
Let \(a_j=j/M\), \(0\le j\le 4M\), be grid points in \([0,4]\). We select
\[
\widehat\delta
=
\arg\min_{\delta\in\{a_0,\ldots,a_{4M}\}}
\widehat R(\delta).
\]

\subsection{Comparison with other thresholding methods}
Section \ref{sec31} established the convergence rate of the proposed estimator. In this section, we study the support recovery performance of these thresholding estimators under a relatively simple dependence structure.

For the comparison results below, we consider the diagonal $\mathrm{VAR}(1)$ Gaussian process
\[
\boldsymbol{y}_t=\boldsymbol\Phi y_{t-1}+\boldsymbol\varepsilon_t,
\qquad
\boldsymbol\Phi=\operatorname{diag}(\underbrace{0,\ldots,0}_{s_0},
\underbrace{\rho,\ldots,\rho}_{p-s_0}),
\]
where \(0<\rho<1\) is a fixed constant and \(s_0=\lfloor s_1/4\rfloor\).
Let
\[
\boldsymbol\Sigma_y=
\begin{pmatrix}
\mathbf{A}_1 & 0\\
0 & v_* I_{p-s_0}
\end{pmatrix},
\]
where
\[
\mathbf{A}_1=(1-a_n)I_{s_0}+a_n\mathbf 1\mathbf 1',
\qquad
a_n=c_a\sqrt{\frac{\log p}{n}},
\]
with \(c_a>0\) and $v_*=1$ are fixed constants.
The innovation covariance matrix is chosen as $\boldsymbol{\Sigma}_\varepsilon=\boldsymbol{\Sigma}_y-\Phi\boldsymbol{\Sigma}_y\Phi$,
so that \(\boldsymbol\Sigma_y\) is the stationary covariance matrix of the process.

Under this construction, the off-diagonal entries in the first block satisfy $\sigma_{ij}=a_n, 1\le i\neq j\le s_0$, whereas all off-diagonal entries in the second block are zero: $\sigma_{ij}=0,s_0+1\le i\neq j\le p$. Moreover, for \(s_0+1\le i\neq j\le p\),
\[
\theta_{ij}=\frac{1+\rho^2}{1-\rho^2}=: \theta_0,
\]
which is a positive constant depending on \(\rho\). We also choose the constants \(\rho\) and \(c_a\) so that the above construction is
compatible with Theorem \ref{the2} for the proposed estimator.
Let \(\delta_*>0\) denote the fixed threshold constant used in the proposed estimator.
Since, for \(1\le i\neq j\le s_0\),
\[
\sigma_{ij}=a_n=c_a\sqrt{\frac{\log p}{n}},
\qquad
\theta_{ij}=1+a_n^2,
\]
the signal strength condition in Theorem \ref{the2},
\[
|\sigma_{ij}|\ge (2+\eta)\delta_*\sqrt{\theta_{ij}\frac{\log p}{n}},
\]
is satisfied for all sufficiently large \(n\) provided that $c_a>(2+\eta)\delta_*$.
Therefore, under this construction, the proposed estimator still satisfies $P(\mathrm{TPR}=1)\to1,$ and $P(\mathrm{FPR}=0)\to1$,
whereas the competing thresholding estimators considered below may fail to recover
the support consistently.
\begin{theorem} \label{the3}
Assume that for some $\xi>0$, $n^\xi\le p$, $\log p= o(\min(\frac{n}{b_n^2}),n^{\rho_3/(2-\rho_3)})$ and for some $\gamma\in(0,1),8\leq s_1=O((\log p)^{\gamma})$. Under the above construction, there exist constants \(\rho\in(0,1)\) and \(c_a>0\),
independent of \(n\), \(p\), and \(s_1\), such that
\[
\inf_{\delta>0}\sup_{\boldsymbol{\Sigma}_y\in U_{\iota}(s_1)}
P\Bigl(
\mathrm{TPR}(\widehat{\boldsymbol{\Sigma}}_y^{g})<1
\ \textrm{or}\
\mathrm{FPR}(\widehat{\boldsymbol{\Sigma}}_y^{g})>0
\Bigr)\to1.
\]
Equivalently,
\[
\inf_{\delta>0}\sup_{\boldsymbol{\Sigma}_y\in U_{\iota}(s_1)}
P\Bigl(
\operatorname{supp}(\widehat{\boldsymbol{\Sigma}}_y^{g})
\neq
\operatorname{supp}(\boldsymbol{\Sigma}_y)
\Bigr)\to1.
\]
Therefore, the universal thresholding estimator is not support recovery consistent
under the dependent setting considered here.
\end{theorem}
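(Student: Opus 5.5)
The proof splits on the size of $\delta$ relative to the signal constant $c_a$. The construction above lies in $\mathcal U_\iota(s_1)$: each row has at most $s_0\le s_1/4$ nonzero entries of size at most $1$, so the sup is bounded below by the probability under this single $\boldsymbol\Sigma_y$. It therefore suffices to show that for every $\delta>0$ this probability tends to one, uniformly in $\delta$. Write $\lambda^g=\delta\sqrt{\log p/n}$. I fix $\rho$ with $\theta_0=(1+\rho^2)/(1-\rho^2)$ large, and $c_a$ with $c_a^2<\theta_0$ (up to a constant factor discussed below). The two cases are:
\begin{itemize}
\item[(a)] $\delta\ge c_a(1+\epsilon_0)$: some true positive is killed, so $\mathrm{TPR}<1$.
\item[(b)] $\delta< c_a(1+\epsilon_0)$: some null entry in the $\rho$-block survives, so $\mathrm{FPR}>0$.
\end{itemize}
The thresholds $\delta$ near $c_a$ are covered by choosing $\epsilon_0$ small and using both cases with some slack. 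Concretely, we need $\sqrt{\theta_0}>c_a(1+\epsilon_0)(1+\epsilon)$ in (b).

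\emph{Case (a).} For $1\le i\ne j\le s_0$ we have $\hat\sigma_{ij}-a_n$ asymptotically $N(0,(1+a_n^2)/n)$, since the first block is i.i.d. ($\Phi=0$ there). It suffices to show one such entry lies below $\lambda^g$. There are about $s_0^2/2\ge 8$ such pairs, and each satisfies $P(\hat\sigma_{ij}<a_n)\to 1/2$ because $a_n\le \delta\sqrt{\log p/n}/(1+\epsilon_0)$. For a single entry this gives probability bounded away from zero but not tending to one. I would therefore exploit the fact that $s_1$ is allowed to grow; if $s_1$ is bounded, I would instead choose $c_a$ slightly above the boundary so that $\hat\sigma_{ij}\le a_n+o_p(\sqrt{\log p/n})<\lambda^g$ for $\delta\ge c_a(1+\epsilon_0)$. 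The latter route is cleaner. By Lemma 1 (uniform convergence), $\max_{ij}|\hat\sigma_{ij}-\sigma_{ij}|=O_p(\sqrt{\log p/n})$, which only gives a constant-level gap. However, for a fixed pair the deviation is $O_p(n^{-1/2})=o_p(\sqrt{\log p/n})$. Hence any fixed pair $(1,2)$ has $\hat\sigma_{12}<\lambda^g$ with probability tending to one, uniformly over $\delta\ge c_a(1+\epsilon_0)$ since $\lambda^g$ is monotone in $\delta$. So $\mathrm{TPR}<1$.

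\emph{Case (b), the main obstacle.} Here I need a lower bound on the maximum of the null off-diagonal entries over the AR block,
\[
\max_{s_0<i<j\le p}\sqrt{n}\,|\hat\sigma_{ij}|\ \ge\ \sqrt{\theta_0}\,(1-\epsilon)\sqrt{2\log p}\quad\text{w.p.}\to1.
\]
The first idea is to restrict to disjoint pairs $(2k-1,2k)$ in $\{s_0+1,\dots,p\}$. The resulting $m\asymp p$ statistics $\sqrt n\hat\sigma_{2k-1,2k}$ are mutually independent, because the coordinates are independent AR(1) Gaussian processes (the $\rho$-block of $\boldsymbol\Sigma_\varepsilon$ is $(1-\rho^2)I$). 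Each has variance approaching $\theta_0$. I then apply a moderate-deviation (Cramér-type) result for the sum of the dependent sequence $y_{it}y_{jt}$. For products of independent Gaussian AR(1) processes, such a bound holds in the range $x=O(\sqrt{\log p})$ because $\log p=o(n^{\rho_3/(2-\rho_3)})$; the same condition used in Lemma 1 justifies this. It yields
\[
P\bigl(\sqrt n\hat\sigma_{ij}>x\sqrt{\theta_0}\bigr)=\bar\Phi(x)(1+o(1)).
\]
Taking $x=(1-\epsilon)\sqrt{2\log p}$, each probability is at least $p^{-(1-\epsilon)^2}$ up to logarithmic factors. Independence across the $m\asymp p$ pairs then gives
\[
P(\text{none exceed})\le \exp\bigl(-c\,p^{1-(1-\epsilon)^2}/\sqrt{\log p}\bigr)\to0.
\]
The hypothesis $n^\xi\le p$ keeps $\log p$ large relative to $\log n$, so the centering by $\bar y_i$ and the $o(1)$ errors are negligible.

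Finally, $\lambda^g\sqrt n=\delta\sqrt{\log p}<c_a(1+\epsilon_0)\sqrt{\log p}$. Choosing $\rho$ so that $\sqrt{2\theta_0}(1-\epsilon)>c_a(1+\epsilon_0)$ makes some null entry exceed $\lambda^g$ with probability tending to one, so $\mathrm{FPR}>0$. This is compatible with the Theorem \ref{the2} requirement $c_a>(2+\eta)\delta_*$, because $\delta_*$ is free to be small: take for instance $\delta_*=c_a/(3+\eta)$. Combining the two cases, the probability tends to one for every $\delta>0$. The bounds are uniform because each case's event is monotone in $\delta$, which handles the $\inf_\delta$. The delicate step is the moderate-deviation lower bound for the dependent product sums, which I expect to be the main technical work.
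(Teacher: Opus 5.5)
Your decomposition is sound, but it differs from the paper's in both cases.

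\textbf{The paper's route.} The paper splits at $\delta_0=\frac12\sqrt{\theta_0}$. For $\delta>\delta_0$ it thresholds away \emph{every} first-block entry, using a uniform deviation constant $M$ from Lemma~\ref{le1}; this forces $(2+\eta)\delta_*<c_a<\delta_0/2-M$. For $\delta\le\delta_0$ it fixes the row $i=p$ and applies Lemma~\ref{le5}. Conditionally on $\{y_{pt}\}_{t\le n}$, the sums $\sum_t y_{pt}y_{jt}$, $j\in J_p$, are exactly independent $N(0,V_{pj,n})$. So only upper-tail concentration of $V_{pj,n}/n$ and of $\hat\theta_{pj}$ (Lemma~\ref{le2}) is needed, plus the elementary Gaussian tail lower bound.

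\textbf{Your route.} You split at $\delta\approx c_a$ and threshold away one fixed first-block pair through its $O_p(n^{-1/2})=o_p(\sqrt{\log p/n})$ fluctuation. That step is correct and simpler than the paper's uniform-$M$ argument. You then get false positives from about $p/2$ disjoint, unconditionally independent pairs in the AR block. Your route gives nearly sharp constants: you only need $c_a(1+\epsilon_0)<\sqrt{2\theta_0}(1-\epsilon)$. It also never touches $\hat\theta$, which plays no real role for the universal threshold. The paper's route, in exchange, needs no lower-tail limit theorem at all.

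\textbf{The unjustified step.} The moderate-deviation lower bound is not justified as written. Lemma~\ref{le1} rests on the Bernstein inequality of Merlev\`ede et al., which only bounds tails from above. So the rate condition $\log p=o(n^{\rho_3/(2-\rho_3)})$ says nothing about $P(\sqrt n\,\hat\sigma_{ij}>x\sqrt{\theta_0})$ from below. The bound itself is true, for instance via cumulant estimates for the Gaussian quadratic form $\sum_t y_{it}y_{jt}$, valid for $x=o(n^{1/6})$; here $x\asymp\sqrt{\log p}=o(n^{1/10})$.

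The easiest repair is the paper's conditioning device, applied pair by pair:
\begin{itemize}
\item Given $\{y_{2k-1,t}\}_{t\le n}$, $S_k=\sum_t y_{2k-1,t}y_{2k,t}\sim N(0,V_k)$, with $V_k=y_{2k-1}'\Gamma y_{2k-1}$ and $\Gamma=(\rho^{|t-s|})_{t,s\le n}$.
\item $EV_k=\operatorname{tr}(\Gamma^2)=n\theta_0+O(1)$ and $\operatorname{Var}(V_k)=2\operatorname{tr}(\Gamma^4)=O(n)$.
\item Hence, for large $n$ and identically in $k$, $P(S_k>x\sqrt{n\theta_0})\ge P\{V_k\ge(1-\epsilon')n\theta_0\}\,\bar\Phi\bigl(x(1-\epsilon')^{-1/2}\bigr)\ge\frac12\bar\Phi\bigl(x(1-\epsilon')^{-1/2}\bigr)$.
\end{itemize}
Independence across pairs then finishes the argument. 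The centering term $\bar y_{2k-1}\bar y_{2k}$ is negligible, as you say.

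\textbf{The choice of $\delta_*$.} Your remark that $\delta_*$ may be taken small is wrong, because Theorem~\ref{the2} requires $\delta$ sufficiently large. Use the paper's order of choices instead, which your argument permits: fix $c_a>(2+\eta)\delta_*$ first, then take $\rho$ close enough to one that $\sqrt{2\theta_0}(1-\epsilon)>c_a(1+\epsilon_0)$.
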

Theorem \ref{the3} shows that, under temporal dependence, the universal thresholding estimator may fail to recover the support consistently even when both the covariance entries and the long-run variances are uniformly bounded. A single universal threshold cannot simultaneously avoid false inclusions in the persistent zero block and preserve weak but nonzero signals in the first sparse block.

We next consider the adaptive thresholding estimator of \citet{2011Cai}. Intuitively, for autocorrelated time series, the estimator $\hat{\theta}_{ij}^c$ in \citet{2011Cai} captures only part of $\theta_{ij}$. Under autocorrelation, the contemporaneous variance $\gamma_{z_{ij}}(0)=\mathrm{Var}(z_{ij,t})$ does not coincide with the long-run variance $\theta_{ij}$ governing the fluctuation of the sample covariance entry. As a result, the adaptive thresholding rule of \citet{2011Cai}, which is calibrated to $\gamma_{z_{ij}}(0)$, may fail to recover the support consistently.  For brevity, we focus on the Gaussian case. Similar to the setting considered in Theorem \ref{the3}, we show that, in the presence of autocorrelation, universal and adaptive thresholding methods developed for independent data may fail to achieve consistent support recovery, in the sense that the true positive rate may fail to converge to one or the false positive rate may fail to converge to zero.
\begin{theorem}  \label{the4}
Assume that for some $\xi>0$, $n^\xi\le p$, $\log p= o(\min(\frac{n}{b_n^2}),n^{\rho_3/(2-\rho_3)})$ and for some $\gamma\in(0,1),8\leq s_1=O((\log p)^{\gamma})$.
Under the same construction as in Theorem \ref{the3}, there exist constants \(\rho\in(0,1)\)
and \(c_a>0\), independent of \(n\), \(p\), and \(s_1\), such that
\[
\inf_{\delta>0}\sup_{\Sigma_y\in U_{\iota}(s_1)}
P\Bigl(
\mathrm{TPR}(\widehat{\boldsymbol{\Sigma}}_y^{c})<1
\ \textrm{or}\
\mathrm{FPR}(\widehat{\boldsymbol{\Sigma}}_y^{c})>0
\Bigr)\to1.
\]
Equivalently,
\[
\inf_{\delta>0}\sup_{\boldsymbol{\Sigma}_y\in U_{\iota}(s_1)}
P\Bigl(
\operatorname{supp}(\widehat{\boldsymbol{\Sigma}}_y^{c})
\neq
\operatorname{supp}(\boldsymbol{\Sigma}_y)
\Bigr)\to1.
\]
Therefore, the adaptive thresholding estimator of \citet{2011Cai} is not support
recovery consistent under the dependent setting considered here.
\end{theorem}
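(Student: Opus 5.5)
The plan is to show that, in the construction preceding Theorem \ref{the3}, the adaptive threshold of \citet{2011Cai} behaves like a universal threshold with the wrong scale in the persistent block. Then a single cut-off value of $\delta$ splits $(0,\infty)$ into a region where the weak first-block signals are all killed and a region where some second-block zero is kept.

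\textbf{Step 1: $\hat\theta^c_{ij}$ converges to the contemporaneous variance, not to $\theta_{ij}$.} In the second block, $\Phi=\rho I$ and $\boldsymbol\Sigma_\varepsilon=(1-\rho^2)I$. Hence the coordinates $y_{it}$, $i>s_0$, are independent Gaussian AR(1) processes with unit variance, and for $i\neq j$ in this block
\[
\gamma_{z_{ij}}(0)=E(y_{it}^2y_{jt}^2)=1,\qquad \theta_{ij}=\theta_0=\frac{1+\rho^2}{1-\rho^2}>1 .
\]
In the first block $\Phi=0$, so the data are i.i.d.\ $N(0,\mathbf A_1)$ and $\gamma_{z_{ij}}(0)=1+a_n^2\to1$. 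I will apply the exponential-tail and mixing bounds of Assumptions \ref{ass3}--\ref{ass4} to the sub-Weibull variables $z_{ij,t}^2$ and $z_{ij,t}$, via the Bernstein-type inequality underlying Lemma 1 in the supplement, together with a union bound over the $p^2$ pairs. This gives
\[
\max_{i\neq j}\bigl|\hat\theta^c_{ij}-\gamma_{z_{ij}}(0)\bigr|=O_p\bigl(\sqrt{\log p/n}\bigr)=o_p(1),
\]
and in particular $\max_{i\neq j}|\hat\theta^c_{ij}-1|=o_p(1)$. Consequently, uniformly in $(i,j)$, $\lambda^c_{ij}=\delta(1+o_p(1))\sqrt{\log p/n}$.

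\textbf{Step 2: a single cut-off and monotonicity in $\delta$.} Fix $c_a>0$ (also large enough that $c_a>(2+\eta)\delta_*$, as required for the proposed estimator). Then choose $\rho\in(0,1)$ close to $1$ so that $\sqrt{2\theta_0}>c_a+2\epsilon$ for some small $\epsilon>0$, and put $t=c_a+\epsilon$. Since $\lambda^c_{ij}$ is increasing in $\delta$ for fixed data:
\begin{itemize}
\item for every $\delta\ge t$, the event $E_1=\{\hat\sigma^c_{ij}=0 \text{ for all } 1\le i\neq j\le s_0 \text{ at } \delta=t\}$ implies $\mathrm{TPR}<1$;
\item for every $\delta\le t$, the event $E_2=\{\hat\sigma^c_{ij}\neq 0 \text{ for some zero pair at } \delta=t\}$ implies $\mathrm{FPR}>0$.
\end{itemize}
Hence $\inf_{\delta>0}P(\text{failure})\ge \min\{P(E_1),P(E_2)\}$, and it suffices to show that $P(E_1)\to1$ and $P(E_2)\to1$.

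\textbf{Step 3: $P(E_1)\to1$.} For $1\le i\neq j\le s_0$, the entry $\hat\sigma_{ij}-a_n$ is an i.i.d.\ average error. There are at most $s_0^2=O((\log p)^{2\gamma})$ such pairs. A union bound with sub-exponential tails gives
\[
\max_{i\neq j\le s_0}|\hat\sigma_{ij}-a_n|=O_p\bigl(\sqrt{\log\log p/n}\bigr)=o_p\bigl(\sqrt{\log p/n}\bigr);
\]
the centering by $\bar y$ contributes only $O_p(\log p/n)$. Therefore $|\hat\sigma_{ij}|\le (c_a+o_p(1))\sqrt{\log p/n}<\lambda^c_{ij}=(c_a+\epsilon)(1+o_p(1))\sqrt{\log p/n}$ for all these pairs. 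By property (2) of $s_\lambda$, every first-block off-diagonal entry is set to zero w.p.$\to1$.

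\textbf{Step 4: $P(E_2)\to1$ (the main obstacle).} This is the step I expect to be hardest. Use the $m=\lfloor (p-s_0)/2\rfloor$ disjoint pairs $(s_0+2k-1,s_0+2k)$. Their statistics $\sqrt n\,\hat\sigma_{ij}/\sqrt{\theta_0}$ are mutually independent, because they involve disjoint sets of independent coordinate processes. For each one I need a lower-tail moderate deviation at level $x_n=(1-\epsilon')\sqrt{2\log p}$:
\[
P\bigl(\sqrt n|\hat\sigma_{ij}|/\sqrt{\theta_0}>x_n\bigr)\ge p^{-(1-\epsilon')^2(1+o(1))}.
\]
I plan to get this from a Cramér-type moderate deviation for the $\alpha$-mixing, sub-Weibull sequence $z_{ij,t}$. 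The range $\log p=o(n^{\rho_3/(2-\rho_3)})$ is exactly what such results allow, and the uniform Gaussian approximation behind Lemma 1 in the supplement should supply it. Alternatively, since $z_{ij,t}$ is a product of two independent Gaussian AR(1) series, one can condition on $\{y_{jt}\}$: the conditional law of $\sum_t y_{it}y_{jt}$ is exactly Gaussian with variance $\sum_{s,t}\rho^{|s-t|}y_{js}y_{jt}$, which concentrates at $n\theta_0$. This yields the bound by an explicit Gaussian tail computation, and is the route I would try first.

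With the bound in hand, independence gives
\[
P\bigl(\text{no exceedance}\bigr)\le \bigl(1-p^{-(1-\epsilon')^2(1+o(1))}\bigr)^{m}\le \exp\bigl(-c\,p^{1-(1-\epsilon')^2-o(1)}\bigr)\to0,
\]
which uses $p\ge n^\xi\to\infty$. Choosing $\epsilon'$ so that $(1-\epsilon')\sqrt{2\theta_0}>c_a+\epsilon$, some zero entry has $|\hat\sigma_{ij}|>\lambda^c_{ij}$ w.p.$\to1$ (after absorbing the $o_p(1)$ of Step 1). For each thresholding rule under consideration (hard, soft, adaptive lasso), $|z|>\lambda$ forces $s_\lambda(z)\neq0$. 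So $\mathrm{FPR}>0$.

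Finally, the constructed $\boldsymbol\Sigma_y$ lies in $\mathcal U_\iota(s_1)$, since each row has at most $s_0\le s_1/4$ entries that are bounded. Taking the supremum over $\mathcal U_\iota(s_1)$ is therefore at least the probability computed above, which completes the argument.
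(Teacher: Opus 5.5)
Your proof is correct and follows the paper's two-regime structure. A large $\delta$ zeroes out the weak first-block signals $a_n=c_a\sqrt{\log p/n}$ because $\hat\theta^c_{ij}\to 1$. A small $\delta$ leaves a false positive in the persistent block, whose null entries fluctuate at the inflated scale $\sqrt{\theta_0/n}$; this is shown via conditional Gaussianity, independence across many null entries, and a Gaussian tail lower bound.

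The differences are bookkeeping only:
\begin{itemize}
\item You use disjoint, unconditionally independent coordinate pairs instead of the paper's row-wise Lemma 5, which conditions on $\{y_{pt}\}$.
\item You place the cutoff at $t=c_a+\epsilon$ rather than $\delta_0=\frac12\sqrt{\theta_0}$. You then need only $\sqrt{2\theta_0}>c_a$, where the paper needs $c_a<\delta_0/2-M$.
\item You make the uniformity over $\delta$ explicit through monotonicity of the threshold in $\delta$.
\end{itemize}
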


Theorem \ref{the4} shows that the adaptive thresholding estimator of \citet{2011Cai} may
also fail to recover the support consistently under temporal dependence. Although the
entrywise variance estimator used there is suitable for contemporaneous fluctuations, it
does not correctly account for the long-run variance that determines the stochastic
behavior of sample covariance entries in dependent data. This issue already arises in relatively simple autocorrelated settings and becomes even more pronounced in more complex ones. For such more general settings, however, this paper does not provide a theoretical justification, and we leave this problem for future research.

\section{Simulation Study}\label{sec4}
In this section, we evaluate the numerical performance of the proposed thresholding estimator through Monte Carlo simulations. For the proposed estimator, the regularization parameter \(\delta\) is selected by the block cross-validation procedure described in Section \ref{sec33}. The proposed method is compared with the universal thresholding estimator and the adaptive thresholding estimator, for which the tuning parameter is selected by ordinary five-fold cross-validation.

\subsection{Simulation setting and performance evaluation}\label{sec4.1}
Following \citet{2011Cai}, we consider two types of covariance matrices to study the numerical performance of the proposed method:
\begin{itemize}
    \item $\text{Model 1}$ (sparse matrix without ordering): $\boldsymbol{\Sigma}_{y} = \operatorname{diag}(\mathbf{A_{1}},\ \mathbf{A_{2}})$,  
  where $\mathbf{A_{2}} = 4\mathbf{I}_{p/2 \times p/2}$, 
  $\mathbf{A_{1}} = \mathbf{B} + \epsilon \mathbf{I}_{p/2 \times p/2}$, 
  $\mathbf{B} = (b_{ij})_{p/2 \times p/2}$ with independent 
  $b_{ij} = \operatorname{unif}(0.3,\,0.8) \times \operatorname{Ber}(1,\,0.2)$ for $i\geq j$ and $b_{ji}=b_{ij}$. Here $\operatorname{unif}(0.3,\,0.8)$ is a random variable taking value uniformly in $[0.3,\,0.8]$; 
  $\operatorname{Ber}(1,\,0.2)$ is a Bernoulli random variable that takes value $1$ with probability $0.2$ and value $0$ with probability $0.8$, 
  and $\epsilon = \max(-\lambda_{\min}(\mathbf{B}),\,0) + 0.01$ to ensure that $\mathbf{A}_1$ is positive definite.
\end{itemize}
\begin{itemize}
    \item $\text{Model 2}$ (banded matrix with ordering): $\boldsymbol{\Sigma}_{y} = \operatorname{diag}(\mathbf{A}_1,\ \mathbf{A}_2)$, 
  where $\mathbf{A}_1 = (\sigma_{ij})_{1 \leq i,j \leq p/2}$, 
  $\sigma_{ij} = \bigl(1 - \frac{|i-j|}{10}\bigr)_{+}$, 
  $\mathbf{A}_2 = 4\mathbf{I}_{p/2 \times p/2}$. 
  $\mathbf{\Sigma}_{y}$ is a two-block diagonal matrix, 
  $\mathbf{A_{1}}$ is a banded and sparse covariance matrix, 
  and $\mathbf{A_{2}}$ is a diagonal matrix with $4$ along the diagonal.
\end{itemize}
Under each model, we consider a coefficient matrix $\mathbf{C}=\text{diag}(\mathbf{D}_1,\mathbf{D}_2)$, where $\mathbf{D}_1=0.5\mathbf{I}_{p/2\times p/2}$ and $\mathbf{D}_2=0.8\mathbf{I}_{p/2\times p/2}$. A $\mathrm{VAR}(1)$ process with mean vector $\boldsymbol{\mu}=0$ and covariance matrix $\boldsymbol{\Sigma}_y$ is generated for $p=100,150,200$ and $n=100,500,800$. The innovation covariance matrix is set to $\boldsymbol\Sigma_\varepsilon=\boldsymbol\Sigma_y-\mathbf{C}\boldsymbol\Sigma_y\mathbf{C}'$,
so that the stationary covariance matrix of the $\mathrm{VAR}(1)$ process is \(\boldsymbol\Sigma_y\). Each configuration is replicated $100$ times. We compare the numerical performance of the proposed estimator $\boldsymbol{\widehat{\Sigma}}_y^*$ with that of the universal thresholding estimator $\boldsymbol{\widehat{\Sigma}}_y^g$ of \citet{2009Rothman} and the adaptive thresholding estimator $\boldsymbol{\widehat{\Sigma}}_y^c$ of \citet{2011Cai}, using both hard thresholding and adaptive lasso thresholding. The threshold level $\delta$ in both $\boldsymbol{\widehat{\Sigma}}_y^g$ and $\boldsymbol{\widehat{\Sigma}}_y^c$ is selected by five-fold cross-validation. Estimation error is measured under the spectral norm. In all simulations, the long-run variance estimator was implemented using the 
default \text{lrvar} function in the R package \text{sandwich}. This corresponds 
to the Andrews HAC estimator with the quadratic spectral kernel and automatic 
bandwidth selection. The block cross-validation procedure used \(K=5\) consecutive 
blocks, no buffer region \((\ell_n=0)\), and a grid size \(M=10\), so that 
\(\delta\in\{0,0.1,\ldots,4.0\}\).

To remain consistent with the theory in Section \ref{sec3}, we define the loss function for an estimator $\boldsymbol{\widehat{\Sigma}}_y$ by
\begin{equation}  \label{eq16}
    L(\boldsymbol{\widehat{\Sigma}}_y,\boldsymbol{\Sigma}_y)=||\boldsymbol{\widehat{\Sigma}}_y-\boldsymbol{\Sigma}_y||_2.
\end{equation}

The ability to recover sparsity is evaluated by the true positive rate (TPR) together with the false positive rate (FPR), as defined in (\ref{eq10}) and (\ref{eq11}).

\subsection{Summary of numerical results}

\begin{table}
\centering
\caption{Comparison of average matrix losses for Model 1  over 100 replications. The standard errors are given in parentheses.}
\label{tab:model1_matrix_loss_comparison}
\begin{tabular}{ccccccc}
\toprule
 & \multicolumn{3}{c}{hard} & \multicolumn{3}{c}{adaptive lasso} \\
\cmidrule(lr){2-4} \cmidrule(lr){5-7}
$p$ & $\boldsymbol{\widehat{\Sigma}}^{g}_y$ & $\boldsymbol{\widehat{\Sigma}}^*_y$ & $\boldsymbol{\widehat{\Sigma}}^c_y$ & $\boldsymbol{\widehat{\Sigma}}^{g}_y$ & $\boldsymbol{\widehat{\Sigma}}^*_y$ & $\boldsymbol{\widehat{\Sigma}}^c_y$ \\
\midrule
\multicolumn{7}{c}{$n=100$}
\\
 100 & 7.35(1.74) & \textbf{5.54(0.26)} & 6.39(0.86) & 6.17(0.54) & \textbf{5.36(0.26)} & 5.93(0.17) \\
 150 & 9.66(1.20) & \textbf{8.11(0.30)} & 9.05(0.35) & 8.97(0.25) & \textbf{7.81(0.33)} & 8.86(0.17) \\
 200 & 12.00(1.20) & \textbf{10.52(0.40)} & 11.55(0.22) & 11.42(0.22) & \textbf{10.16(0.39)} & 11.41(0.17) \\ 
\midrule
\multicolumn{7}{c}{$n=500$}
\\
 100 & 4.28(0.36) & \textbf{2.72(0.17)} & 3.65(0.27) & 4.04(0.32) & \textbf{2.63(0.18)} & 3.45(0.25) \\
 150 & 5.77(0.39) & \textbf{4.06(0.23)} & 5.37(0.32) & 5.55(0.36) & \textbf{3.99(0.26)} & 5.15(0.29) \\
 200 & 7.18(0.45) & \textbf{5.48(0.26)} & 7.18(0.36) & 6.95(0.40) & \textbf{5.42(0.28)} & 6.92(0.33) \\ 
\midrule
\multicolumn{7}{c}{$n=800$}
\\
 100 & 3.13(0.23) & \textbf{2.01(0.11)} & 2.64(0.18) & 2.99(0.21) & \textbf{1.96(0.14)} & 2.53(0.16) \\
 150 & 4.20(0.32) & \textbf{2.96(0.18)} & 3.89(0.27) & 4.07(0.28) & \textbf{2.93(0.18)} & 3.78(0.24) \\
 200 & 5.15(0.29) & \textbf{4.01(0.18)} & 5.19(0.25) & 5.07(0.27) & \textbf{4.02(0.21)} & 5.09(0.23) \\
\bottomrule
\end{tabular}
\end{table}

$\quad$ Table \ref{tab:model1_matrix_loss_comparison} summarizes results of average matrix loss for Model 1. According to Table \ref{tab:model1_matrix_loss_comparison}, under both the hard-thresholding rule and the adaptive lasso rule, the proposed estimator consistently outperforms the universal and adaptive thresholding estimators across all parameter settings. The estimation error decreases as the sample size $n$ increases, whereas for fixed $n$, it increases with the dimension $p$ because of the growing number of nonzero entries. Among the thresholding rules, adaptive lasso generally performs better than hard thresholding, although the improvement is modest.

\begin{table}
\centering
\caption{Comparison of average matrix losses for  Model 2 over 100 replications. The standard errors are given in parentheses.}
\label{tab:model2_matrix_loss_comparison}
\begin{tabular}{lccccccc}
\toprule
& \multicolumn{3}{c}{hard} & & \multicolumn{3}{c}{adaptive lasso} \\
\cmidrule{2-4} \cmidrule{6-8}
$p$ & $\boldsymbol{\widehat{\Sigma}}_y^g$ & $\boldsymbol{\widehat{\Sigma}}_y^*$ & $\boldsymbol{\widehat{\Sigma}}_y^c$ & & $\boldsymbol{\widehat{\Sigma}}_y^g$ & $\boldsymbol{\widehat{\Sigma}}_y^*$ & $\boldsymbol{\widehat{\Sigma}}_y^c$ \\
\midrule
\multicolumn{8}{c}{$n=100$} \\
100 & 8.93(0.62) & \textbf{3.80(0.97)} & 5.13(1.16) & & 8.74(0.08) & \textbf{3.74(0.78)} & 5.16(0.76) \\
150 & 9.56(1.47) & \textbf{4.24(0.72)} & 6.10(1.17) & & 8.97(0.40) & \textbf{4.21(0.65)} & 5.84(0.70) \\
200 & 10.38(2.15) & \textbf{4.72(0.96)} & 6.64(1.19) & & 9.15(0.89) & \textbf{4.65(0.87)} & 6.36(0.73) \\
\midrule
\multicolumn{8}{c}{$n=500$} \\
100 & 4.63(0.41) & \textbf{1.57(0.36)} & 1.99(0.44) & & 4.59(0.37) & \textbf{1.52(0.33)} & 1.73(0.29) \\
150 & 5.60(0.52) & \textbf{1.70(0.35)} & 2.21(0.50) & & 5.41(0.46) & \textbf{1.64(0.27)} & 1.94(0.29) \\
200 & 6.25(0.46) & \textbf{1.76(0.28)} & 2.33(0.40) & & 6.01(0.39) & \textbf{1.74(0.27)} & 2.07(0.24) \\
\midrule
\multicolumn{8}{c}{$n=800$} \\
100 & 3.26(0.30) & \textbf{1.24(0.31)} & 1.63(0.33) & & 3.24(0.29) & \textbf{1.19(0.28)} & 1.31(0.27) \\
150 & 3.92(0.29) & \textbf{1.32(0.25)} & 1.72(0.33) & & 3.87(0.28) & \textbf{1.26(0.21)} & 1.42(0.20) \\
200 & 4.28(0.34) & \textbf{1.37(0.25)} & 1.79(0.30) & & 4.25(0.31) & \textbf{1.35(0.23)} & 1.52(0.21) \\
\bottomrule
\end{tabular}
\end{table}
Table \ref{tab:model2_matrix_loss_comparison} gives results of average matrix loss for Model 2. According to Table \ref{tab:model2_matrix_loss_comparison}, the proposed method based on long-run variance continues to outperform the competing thresholding methods across all parameter settings. The adaptive lasso rule generally performs better than the hard-thresholding rule, although this advantage diminishes as the sample size increases. Compared with Model 1, the growth in estimation error with respect to dimensionality is more moderate, owing to the bounded number of nonzero entries.

Tables \ref{tab:model1_tpr_fpr_comparison} and \ref{tab:model2_tpr_fpr_comparison} report the TPR and FPR for support recovery based on the off-diagonal entries. 

\begin{table}
\centering
\caption{Comparison of average TPR for Model 1  over 100 replications. The FPRs are given in parentheses.}
\label{tab:model1_tpr_fpr_comparison}
\begin{tabular}{ccccccc}
\toprule
 & \multicolumn{3}{c}{hard} & \multicolumn{3}{c}{adaptive lasso} \\
\cmidrule(lr){2-4} \cmidrule(lr){5-7}
$p$ & $\boldsymbol{\widehat{\Sigma}}^{g}_y$ & $\boldsymbol{\widehat{\Sigma}}^*_y$ & $\boldsymbol{\widehat{\Sigma}}^c_y$ & $\boldsymbol{\widehat{\Sigma}}^{g}_y$ & $\boldsymbol{\widehat{\Sigma}}^*_y$ & $\boldsymbol{\widehat{\Sigma}}^c_y$ \\
\midrule
\multicolumn{7}{c}{$n=100$}
\\
 100 & 0.00(0.00) & \textbf{0.08(0.01)} & 0.00(0.00) & 0.00(0.00) & \textbf{0.19(0.04)} & 0.03(0.02) \\
 150 & 0.00(0.00) & \textbf{0.08(0.01)} & 0.00(0.00) & 0.01(0.01) & \textbf{0.20(0.05)} & 0.03(0.02) \\
 200 & 0.00(0.00) & \textbf{0.07(0.01)} & 0.00(0.00) & 0.02(0.01) & \textbf{0.17(0.05)} & 0.02(0.02) \\
\midrule
\multicolumn{7}{c}{$n=500$}
\\
 100 & 0.24(0.02) & \textbf{0.56(0.01)} & 0.33(0.02) & 0.46(0.05) & \textbf{0.74(0.05)} & 0.56(0.05) \\
 150 & 0.30(0.02) & \textbf{0.54(0.02)} & 0.33(0.03) & 0.51(0.06) & \textbf{0.71(0.08)} & 0.55(0.07) \\
 200 & 0.30(0.03) & \textbf{0.49(0.03)} & 0.28(0.03) & 0.49(0.07) & \textbf{0.65(0.09)} & 0.49(0.09) \\
\midrule
\multicolumn{7}{c}{$n=800$}
\\
 100 & 0.44(0.02) & \textbf{0.71(0.01)} & 0.52(0.01) & 0.66(0.05) & \textbf{0.85(0.04)} & 0.73(0.05) \\
 150 & 0.49(0.02) & \textbf{0.69(0.01)} & 0.51(0.02) & 0.69(0.06) & \textbf{0.83(0.07)} & 0.72(0.07) \\
 200 & 0.49(0.03) & \textbf{0.64(0.02)} & 0.47(0.03) & 0.68(0.08) & \textbf{0.79(0.09)} & 0.67(0.09) \\
\bottomrule
\end{tabular}
\end{table}
The results in Table \ref{tab:model1_tpr_fpr_comparison} indicate that $\boldsymbol{\widehat{\Sigma}}_y^*$ provides a favorable balance between TPR and FPR compared with $\boldsymbol{\widehat{\Sigma}}_y^c$ and $\boldsymbol{\widehat{\Sigma}}_y^g$. The adaptive lasso rule yields higher TPR and FPR than the hard-thresholding rule because it typically selects smaller regularization parameters, which leads to less aggressive thresholding. However, the FPR under the adaptive lasso rule is generally higher than that under the hard-thresholding rule. Therefore, we can select different threshold rules based on varying requirements for the TPR and FPR. 

\begin{table}
\centering
\caption{Comparison of average TPR for Model 2  over 100 replications. The FPRs are given in parentheses.}
\label{tab:model2_tpr_fpr_comparison}
\begin{tabular}{lccccccc}
\toprule
& \multicolumn{3}{c}{hard} & & \multicolumn{3}{c}{adaptive lasso} \\
\cmidrule{2-4} \cmidrule{6-8}
$p$ & $\boldsymbol{\widehat{\Sigma}}_y^g$ & $\boldsymbol{\widehat{\Sigma}}_y^*$ & $\boldsymbol{\widehat{\Sigma}}_y^c$ & & $\boldsymbol{\widehat{\Sigma}}_y^g$ & $\boldsymbol{\widehat{\Sigma}}_y^*$ & $\boldsymbol{\widehat{\Sigma}}_y^c$ \\
\midrule
\multicolumn{8}{c}{$n=100$} \\
100 & 0.00(0.00) & \textbf{0.60(0.00)} & 0.39(0.00) & & 0.00(0.01) & \textbf{0.70(0.02)} & 0.55(0.01) \\
150 & 0.00(0.00) & \textbf{0.58(0.00)} & 0.33(0.00) & & 0.00(0.01) & \textbf{0.69(0.01)} & 0.50(0.01) \\
200 & 0.00(0.00) & \textbf{0.55(0.00)} & 0.29(0.00) & & 0.00(0.00) & \textbf{0.70(0.01)} & 0.47(0.00) \\
\midrule
\multicolumn{8}{c}{$n=500$} \\
100 & 0.36(0.02) & \textbf{0.85(0.00)} & 0.77(0.00) & & 0.53(0.05) & \textbf{0.88(0.01)} & 0.81(0.00) \\
150 & 0.30(0.02) & \textbf{0.84(0.00)} & 0.75(0.00) & & 0.47(0.04) & \textbf{0.87(0.00)} & 0.79(0.00) \\
200 & 0.26(0.01) & \textbf{0.83(0.00)} & 0.74(0.00) & & 0.43(0.03) & \textbf{0.86(0.00)} & 0.79(0.00) \\
\midrule
\multicolumn{8}{c}{$n=800$} \\
100 & 0.47(0.02) & \textbf{0.90(0.00)} & 0.83(0.00) & & 0.63(0.05) & \textbf{0.91(0.00)} & 0.85(0.00) \\
150 & 0.42(0.01) & \textbf{0.88(0.00)} & 0.81(0.00) & & 0.58(0.04) & \textbf{0.90(0.00)} & 0.84(0.00) \\
200 & 0.39(0.01) & \textbf{0.88(0.00)} & 0.81(0.00) & & 0.56(0.03) & \textbf{0.90(0.00)} & 0.84(0.00) \\
\bottomrule
\end{tabular}
\end{table}
Similar patterns are observed in Table \ref{tab:model2_tpr_fpr_comparison}, although the TPR is generally higher than that in Table \ref{tab:model1_tpr_fpr_comparison}. This is due to the structure of Model 2, in which each row contains a fixed number of nonzero entries. Consequently, the estimation problem is simpler, leading to improved support recovery and a higher TPR. Meanwhile, the larger proportion of zero entries encourages more conservative thresholding, resulting in a lower FPR.

To provide a clearer illustration of support recovery, we plot heatmaps of the estimated supports based on 100 replications for $p = 150$ with $n = 800$ and $p=100$ with $n=100$ in Figures \ref{fig:model1_n800}, \ref{fig:model2_n800}, \ref{fig:model1_n100} and \ref{fig:model2_n100}. The adaptive lasso thresholding rule typically selects smaller regularization parameters, resulting in a higher TPR but also a higher FPR than hard thresholding. By contrast, the proposed method based on long-run variance produces recovery patterns that more accurately reflect the true covariance structure.
\begin{figure}[htbp]
    \centering
    
        \includegraphics[width=\textwidth]{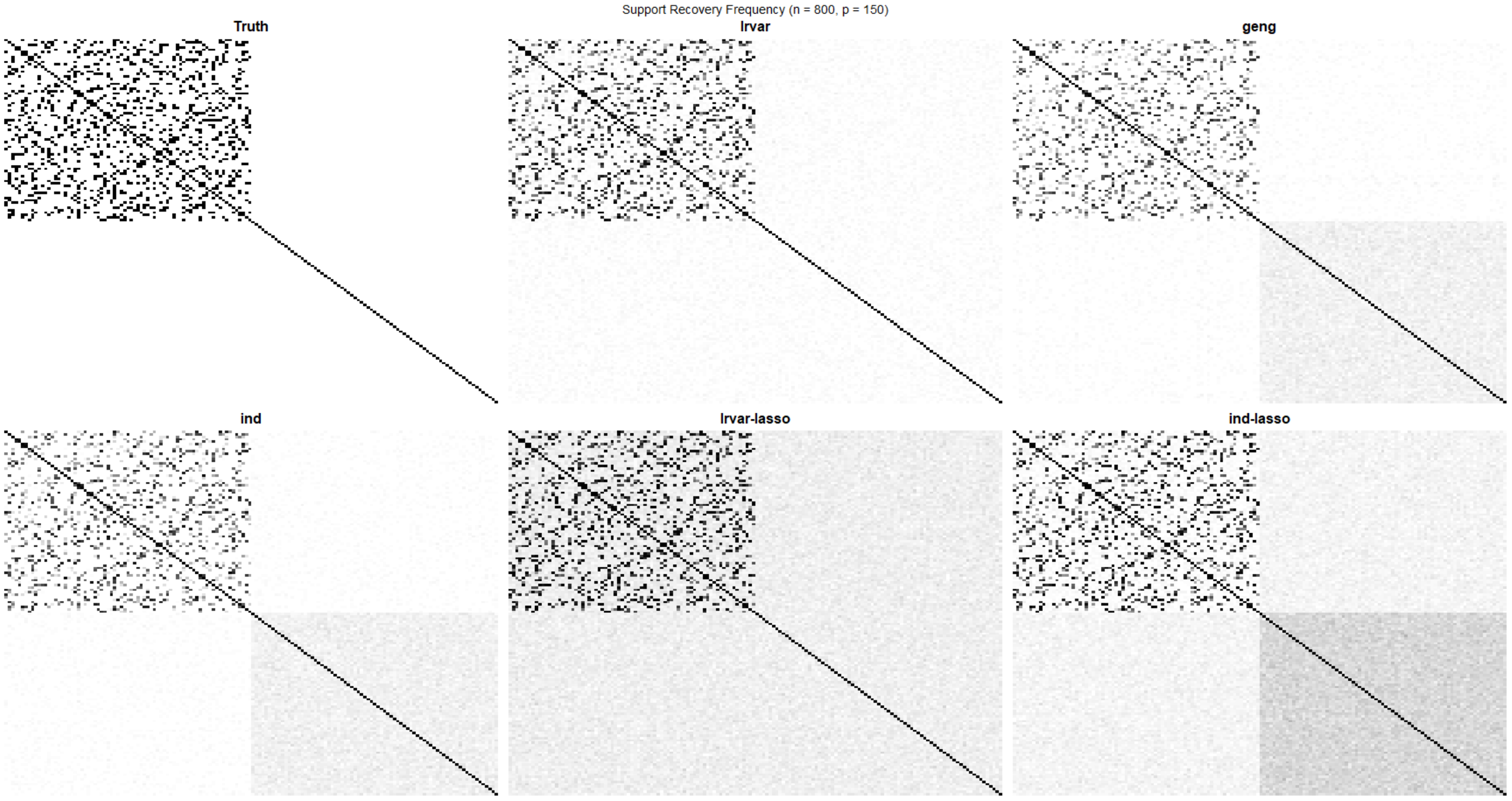}
    
    \caption{Heatmaps of the frequency of the nonzero entries identified for each entry of the Model 1 covariance matrix (when $p = 150$ and $n=800$) out of 100 replications. Black indicates 100 nonzeros identified out of 100 runs; white indicates 0/100.}
    \label{fig:model1_n800}
\end{figure}

Figure \ref{fig:model1_n800} shows that the proposed estimator $\boldsymbol{\widehat{\Sigma}}_y^*$ achieves better support recovery than the competing methods. Adaptive lasso thresholding yields better recovery of nonzero entries, whereas hard thresholding is more effective in identifying zero entries. In contrast, the universal and adaptive thresholding estimators $\boldsymbol{\widehat{\Sigma}}_y^c$ and $\boldsymbol{\widehat{\Sigma}}_y^g$ exhibit substantial misclassification of zero entries in the diagonal block $\mathbf{A}_2$. These findings are consistent with the theoretical results established in this paper.

\begin{figure}[htbp]
    \centering
    
        \includegraphics[width=\textwidth]{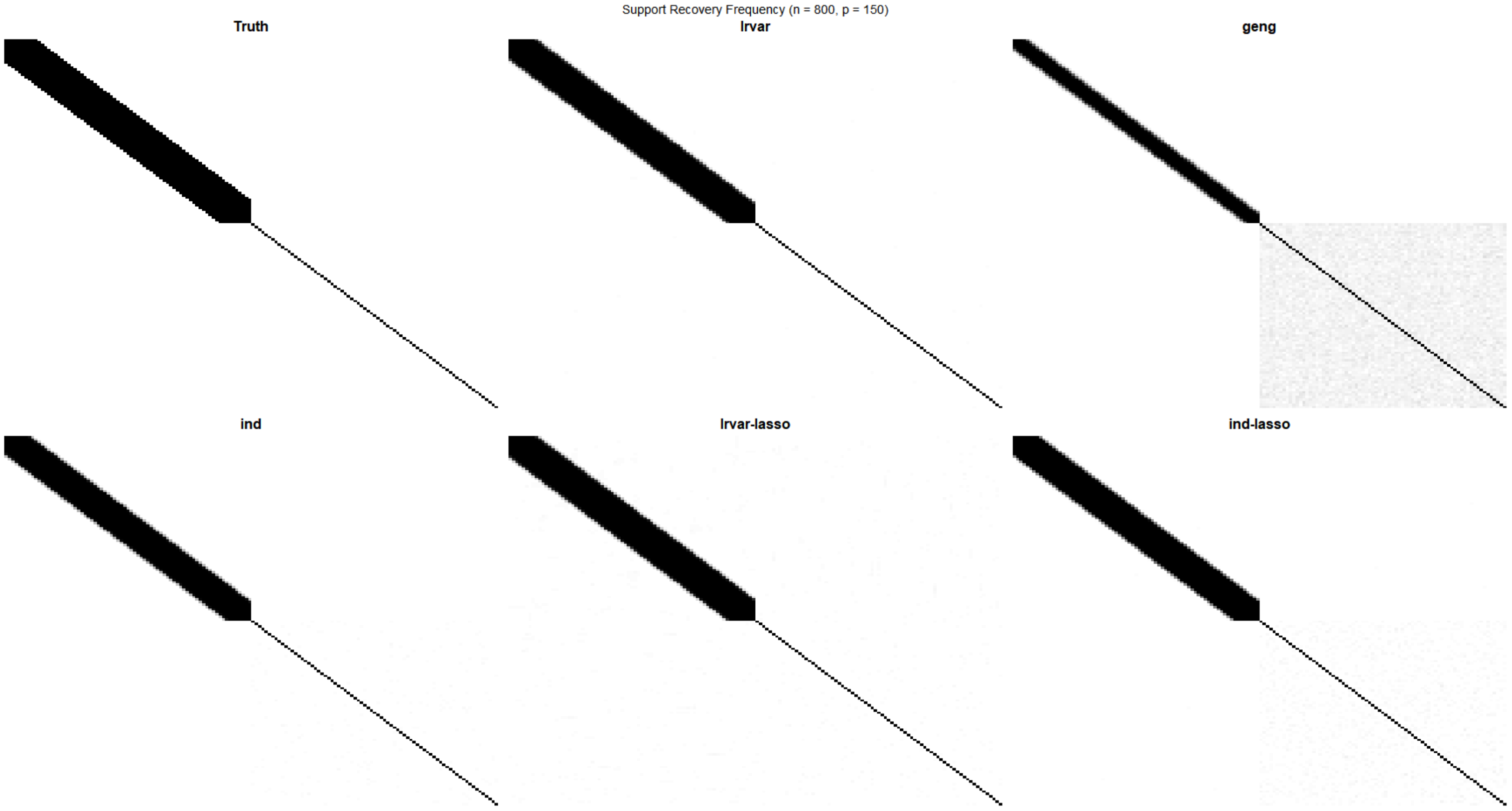}
    
    \caption{Heatmaps of the frequency of the nonzero entries identified for each entry of the Model 2 covariance matrix (when $p = 150$ and $n=800$) out of 100 replications. Black indicates 100 nonzeros identified out of 100 runs; white indicates 0/100.}
    \label{fig:model2_n800}
\end{figure}

Figure \ref{fig:model2_n800} shows that the proposed method based on long-run variance consistently outperforms the competing thresholding estimators. Adaptive lasso rule improves the recovery of zero entries but sacrifices accuracy in identifying nonzero entries. This phenomenon, which is consistent with the results in Table \ref{tab:model2_tpr_fpr_comparison}, arises because a low proportion of nonzero entries leads to the selection of larger regularization parameters, which favor sparsity and improve recovery of zero entries at the expense of detecting nonzero ones.

\begin{figure}[htbp]
    \centering
    
        \includegraphics[width=\textwidth,height=0.4\textheight]{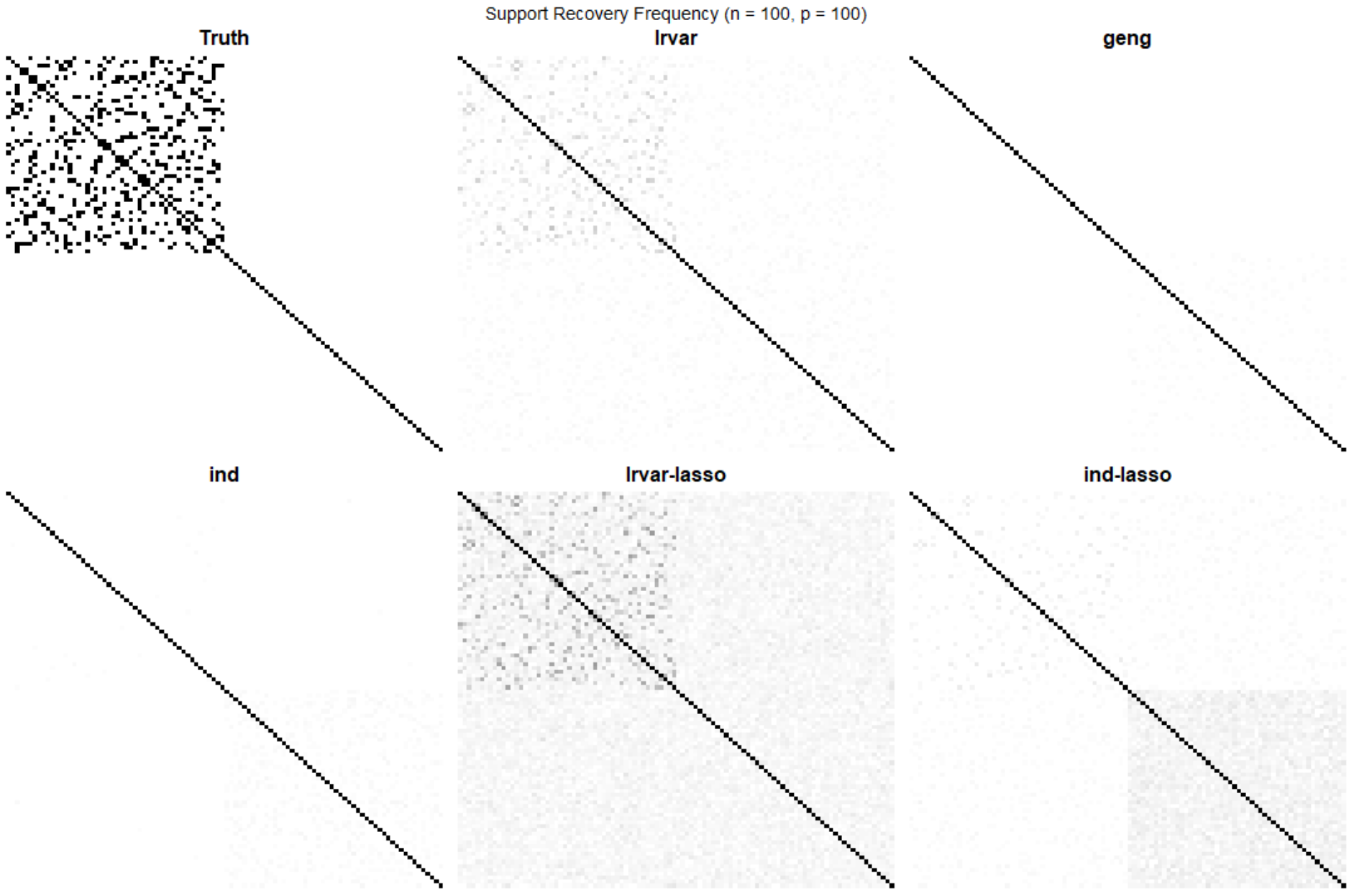}
    
    \caption{Heatmaps of the frequency of the nonzero entries identified for each entry of the Model 1 covariance matrix (when $p = 100$ and $n=100$) out of 100 replications. Black indicates 100 nonzeros identified out of 100 runs; white indicates 0/100.}
    \label{fig:model1_n100}
\end{figure}

\begin{figure}[htbp]
    \centering
    
        \includegraphics[width=\textwidth]{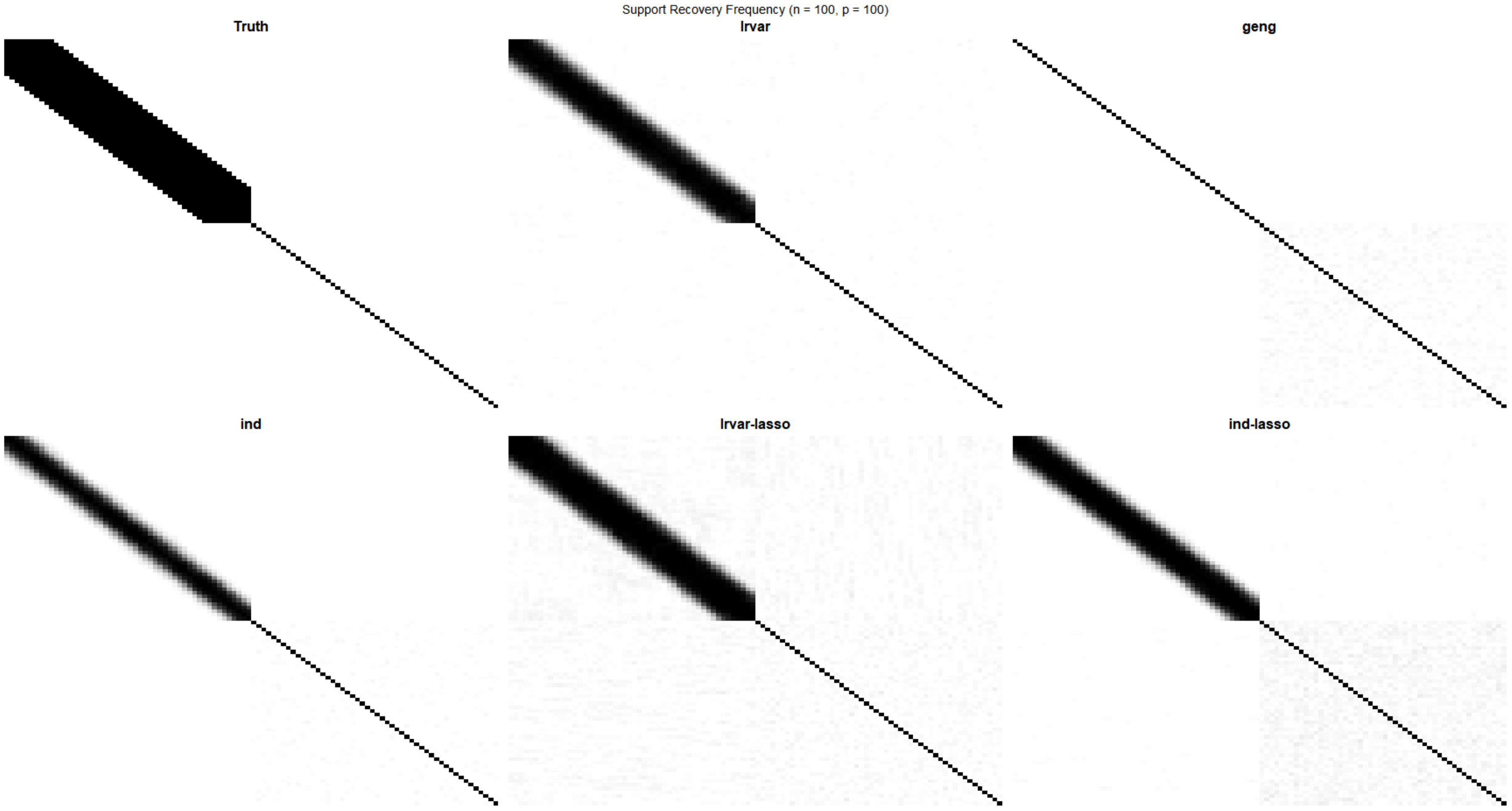}
    
    \caption{Heatmaps of the frequency of the nonzero entries identified for each entry of the Model 2 covariance matrix (when $p = 100$ and $n=100$) out of 100 replications. Black indicates 100 nonzeros identified out of 100 runs; white indicates 0/100.}
    \label{fig:model2_n100}
\end{figure}

Heatmaps of the nonzero entries identified over 100 replications for $p = 100$ and $n = 100$ are presented in Figures \ref{fig:model1_n100} and \ref{fig:model2_n100}. When the dimension $p$ is comparable to or larger than the sample size $n$, the universal thresholding method of \citet{2008Bickel} and the adaptive thresholding method of \citet{2011Cai} become less sensitive to small covariance entries. As a result, many weak but potentially informative signals are removed. In contrast, the proposed method is more effective at preserving these small entries.

\section{Real Data Analysis}\label{sec5}
Additional empirical results are presented using the dataset of \citet{2001Khan} and stock return data from \citet{2020Pelger}.

\subsection{Gene-expression covariance analysis}
We apply the proposed long-run thresholding estimator to the SRBC microarray
dataset of \citet{2001Khan} and compare the sparsity patterns produced by
different thresholding estimators, although the observations are not naturally time ordered. The SRBC dataset was previously analyzed by \citet{2009Rothman} using universal thresholding and by \citet{2011Cai} using adaptive thresholding. To ensure comparability, we follow the same procedures as those in \citet{2009Rothman} and \citet{2011Cai}.

The SRBC dataset contains 63 training tissue samples, each with 2308 gene expression measurements. The original dataset contained 6567 genes and was reduced to 2308 genes after an initial filtering step; see \citet{2001Khan}. The samples belong to four tumor types: 23 EWS, 8 BL-NHL, 12 NB, and 20 RMS. Following \citet{2009Rothman}, we rank the genes according to their discriminative power using the $F$-statistic \[
F = \frac{1}{k-1} \sum_{m=1}^{k} n_m (\bar{x}_m - \bar{x})^2 \Bigg/ \bigg( \frac{1}{n-k} \sum_{m=1}^{k} (n_m - 1) \hat{\sigma}_m^2 \bigg),
\]
where $n = 63$ denotes the sample size and $k = 4$ the number of tumor classes. For $1 \leq m \leq 4$, let $n_m$ denote the sample size of class $m$, and let $\bar{x}_m$ and $\hat{\sigma}_m^2$ denote the sample mean and sample variance of class $m$, respectively. The overall sample mean is denoted by $\bar{x}$.  

Based on the $F$-statistic, we select the top 40 genes and the bottom 160 genes. The top 40 genes are further ordered following \citet{2009Rothman}. Using these 200 genes, we evaluate the performance of the three estimators $\boldsymbol{\widehat{\Sigma}}_y^*$, $\boldsymbol{\widehat{\Sigma}}_y^g$, and $\boldsymbol{\widehat{\Sigma}}_y^c$. The tuning parameter $\delta$ is selected by five-fold cross-validation. Specifically, the 63 samples are partitioned into five groups of approximately equal size. To preserve class balance, the proportions of the four tumor types are kept nearly equal across the groups.  

\begin{figure}[htbp]
    \centering
    
        \includegraphics[width=\textwidth]{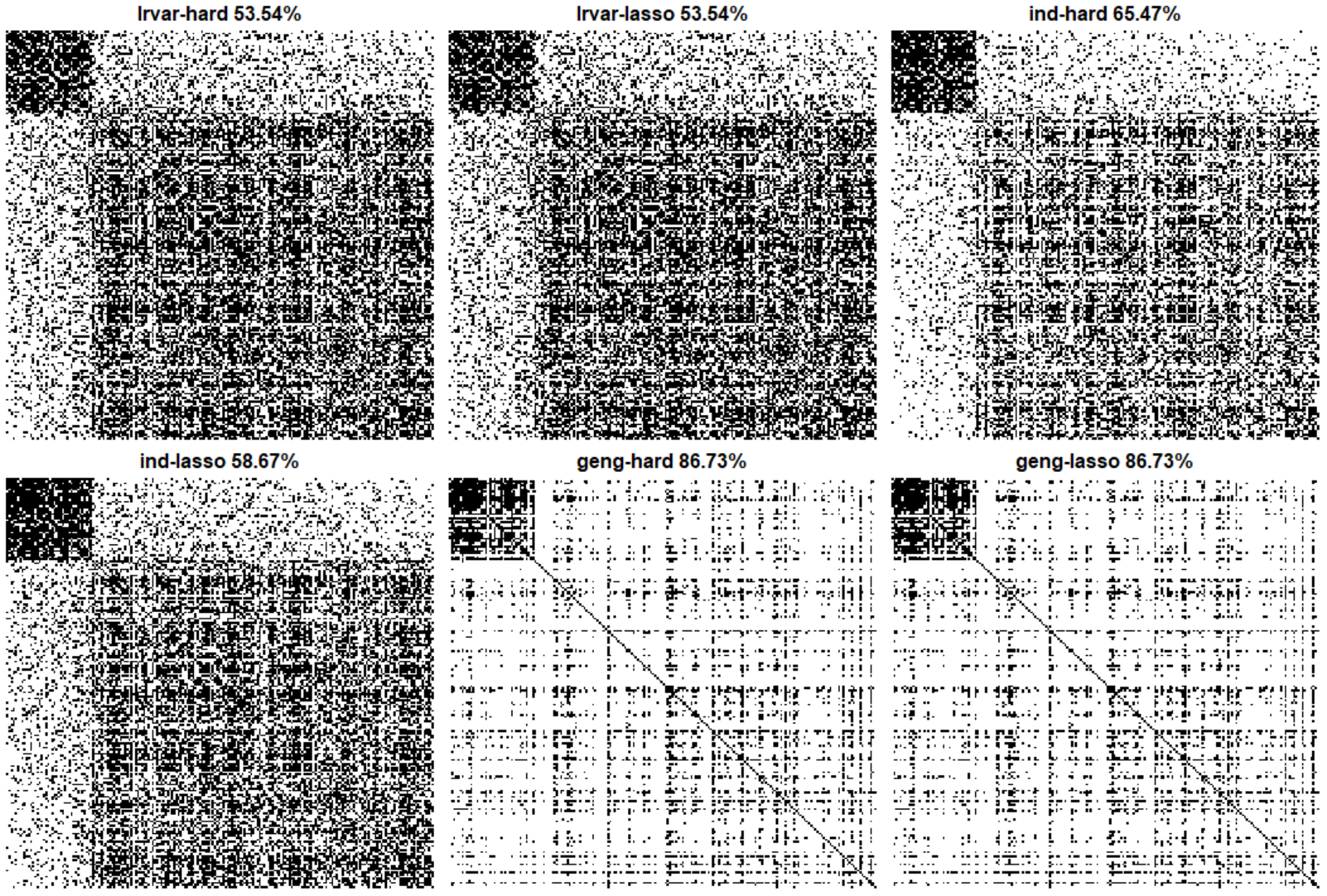}
    
    \caption{ Heatmaps of the estimated supports}
    \label{fig:real data}
\end{figure}

Figure \ref{fig:real data} presents heatmaps of the estimated supports under different thresholding methods. For the universal thresholding estimator $\boldsymbol{\widehat{\Sigma}}_y^g$, the results are similar under both hard thresholding and adaptive lasso thresholding, yielding highly sparse estimates with 86.73$\%$ zero off-diagonal entries. The adaptive thresholding estimator \(\boldsymbol{\widehat\Sigma}_y^c\) produces less sparse estimates,
with 65.47$\%$ zero off-diagonal entries under hard thresholding and 58.67$\%$ under adaptive
lasso thresholding. The proposed estimator \(\boldsymbol{\widehat\Sigma}_y^*\) is less sparse, with
53.54$\%$ zero off-diagonal entries under both hard thresholding and adaptive lasso
thresholding, indicating a more balanced recovery of the covariance structure.  

These findings are consistent with the simulation results. Although the variables are selected using the $F$-statistic, correlations among the bottom-ranked variables may still exist. The resulting sparsity pattern suggests that the universal thresholding estimator may be overly sparse. Since the true support is unknown in real data, this comparison should be interpreted as an empirical illustration of the sparsity patterns produced by different methods.   

In contrast, the proposed method based on long-run variance produces a less aggressive sparsity pattern and may retain more potentially relevant covariance entries.

\subsection{Portfolio optimization}
\citet{1952Markowitz} defined the mean–variance optimal portfolio (MVP) as the solution to the following optimization problem:
\begin{equation}  \label{eq17}
\begin{split}
        &\min_{\boldsymbol{\xi} \in \mathbb{R}^p} \ \boldsymbol{\xi}' \boldsymbol{\Sigma}_y \boldsymbol{\xi}
\quad \\&\text{subject to} \quad \boldsymbol{\xi}' \mathbf{1} = 1, \ \boldsymbol{\xi}' \boldsymbol{\mu} = \gamma_n,
\end{split}
\end{equation}
where \(\mathbf{1}\) denotes a \(p \times 1\) vector of ones, $\boldsymbol{\mu} = E(\boldsymbol y_t)$, and $\gamma_n$ is the target expected return.  

It is well known (see \citet{1959Markowitz} and \citet{2009Cochrane}) that the optimal portfolio is given by
\begin{equation}  \label{eq18}
\boldsymbol{\xi}_n =
\frac{\theta_n - \gamma_n \psi_n}{\phi_n \theta_n - \psi_n^2} \boldsymbol{\Sigma}_y^{-1} \mathbf{1}
+
\frac{\gamma_n \phi_n - \psi_n}{\phi_n \theta_n - \psi_n^2} \boldsymbol{\Sigma}_y^{-1} \boldsymbol{\mu},
\end{equation}
where
\[
\phi_n = \mathbf{1}' \boldsymbol{\Sigma}_y^{-1} \mathbf{1}, \quad
\psi_n = \mathbf{1}' \boldsymbol{\Sigma}_y^{-1} \boldsymbol{\mu}, \quad
\theta_n = \boldsymbol{\mu}' \boldsymbol{\Sigma}_y^{-1} \boldsymbol{\mu}.
\]

The corresponding portfolio variance is
\begin{equation}  \label{eq19}
\boldsymbol{\xi}_n' \boldsymbol{\Sigma}_y \boldsymbol{\xi}_n =
\frac{\theta_n \gamma_n^2 - 2 \psi_n \gamma_n + \phi_n}{\phi_n \theta_n - \psi_n^2}.
\end{equation}

Denote by $\xi_n^g$ the portfolio obtained from $\boldsymbol{\xi}_n$ in (\ref{eq18}) by replacing $\gamma_n$ with $\psi_n/\phi_n$.

The corresponding global minimum variance portfolio (GMVP), without a return constraint, satisfies
\begin{equation}  \label{eq20}
\boldsymbol{\xi}_{ng}' \boldsymbol{\Sigma}_y \boldsymbol{\xi}_{ng} = \phi_n^{-1}.
\end{equation}

This result is obtained by minimizing (\ref{eq19}), or equivalently by setting \(\gamma_n = \psi_n / \phi_n\) in (\ref{eq18}).

Based on the observed data, we construct the covariance estimator $\boldsymbol{\widehat{\Sigma}}_y^*$ as before. We also estimate the mean vector by
\begin{equation}  \label{eq21}
\widehat{\boldsymbol{\mu}}_n = \frac{1}{n} \sum_{t=1}^n \boldsymbol{y}_t.
\end{equation}

To evaluate the out-of-sample portfolio performance of different covariance matrix estimators, we use overnight log excess returns for a balanced panel of S$\&$P 500 stocks from the study of \citet{2020Pelger}, Understanding Systematic Risk: A High-Frequency Approach. The sample period spans January 1, 2004 to December 30, 2005. The dataset contains 332 stocks observed over $T = 504$ trading days. To rank the stocks, we use the mean absolute correlation coefficient, defined by \[\text{Score}_i = \frac{\sum_{j \neq i} |\hat{r}_{ij}|}{p - 1},\] where $p = 332$ is the number of stocks and $\hat{r}_{ij}$ denotes the sample correlation coefficient between stocks $i$ and $j$. Based on this score, we select the top 20 and bottom 80 stocks. Using these 100 stocks, we evaluate out-of-sample portfolio performance using the sample covariance matrix and the four estimators $\boldsymbol{\widehat{\Sigma}}_{\mathrm{hard}}^*$, $\boldsymbol{\widehat{\Sigma}}_{\mathrm{lasso}}^*$, $\boldsymbol{\widehat{\Sigma}}_{LS}$, and $\boldsymbol{\widehat{\Sigma}}_{NLS}$. The stock ranking is used only to construct a fixed empirical universe for comparison and is not intended as part of a real-time trading rule.

At each time $t$, we use the previous $n$ days, from $t-n$ to $t-1$, as the training window to estimate the covariance matrix and compute the portfolio weights. The resulting weight vector $\hat{w}_0$ is then used to calculate portfolio returns over the next 20 trading days. The window is subsequently shifted forward by 20 days, and the procedure is repeated. This corresponds to a monthly rebalancing strategy in which the portfolio is held for 20 trading days before being re-optimized. This procedure yields a sequence of out-of-sample daily portfolio returns over the evaluation period. The annualized realized risk is obtained by multiplying the sample standard deviation of the out-of-sample portfolio returns by \(\sqrt{250}\). We also report
the annualized Sharpe ratio, defined by $\operatorname{SR}=\sqrt{250}\bar r/\widehat{\operatorname{sd}}(r_t)$, where \(r_t\) denotes the daily out-of-sample portfolio excess return, \(\bar r\) is its sample mean, and \(\widehat{\operatorname{sd}}(r_t)\) is its sample standard deviation.

We consider different training-window lengths, with $n = 50,55,\ldots,120$ and $p = 100$. When $n \leq p$, the sample covariance matrix is not invertible, and standard portfolio estimators such as MVP and GMVP cannot be applied directly. For all covariance estimators used in portfolio construction, if the estimated matrix is not positive definite, eigenvalues smaller than \(10^{-6}\) are replaced by \(10^{-6}\) before portfolio optimization.

We compare with two shrinkage estimators. The linear shrinkage estimator
\(\boldsymbol{\widehat\Sigma}^{LS}\) of \citet{2004LedoitWolf} is defined as $\boldsymbol{\widehat\Sigma}^{LS}=(1-\widehat\alpha)\boldsymbol{\widehat{\Sigma}}_y+\widehat\alpha \boldsymbol{\widehat\mu} I_p$, $\boldsymbol{\widehat\mu}=\frac{1}{p}\operatorname{tr}(\boldsymbol{\widehat{\Sigma}}_y)$,
where \(\widehat\alpha\) is the
data-driven shrinkage intensity. The nonlinear shrinkage estimator
\(\boldsymbol{\widehat\Sigma}^{NLS}\) of \citet{2012LedoitWolf} is constructed by preserving
the sample eigenvectors of \(\boldsymbol{\widehat{\Sigma}}_y\) and replacing its sample eigenvalues with nonlinearly shrunk eigenvalues.

\begin{figure}[htbp]
    \centering
    
        \includegraphics[width=\textwidth,height=0.4\textheight]{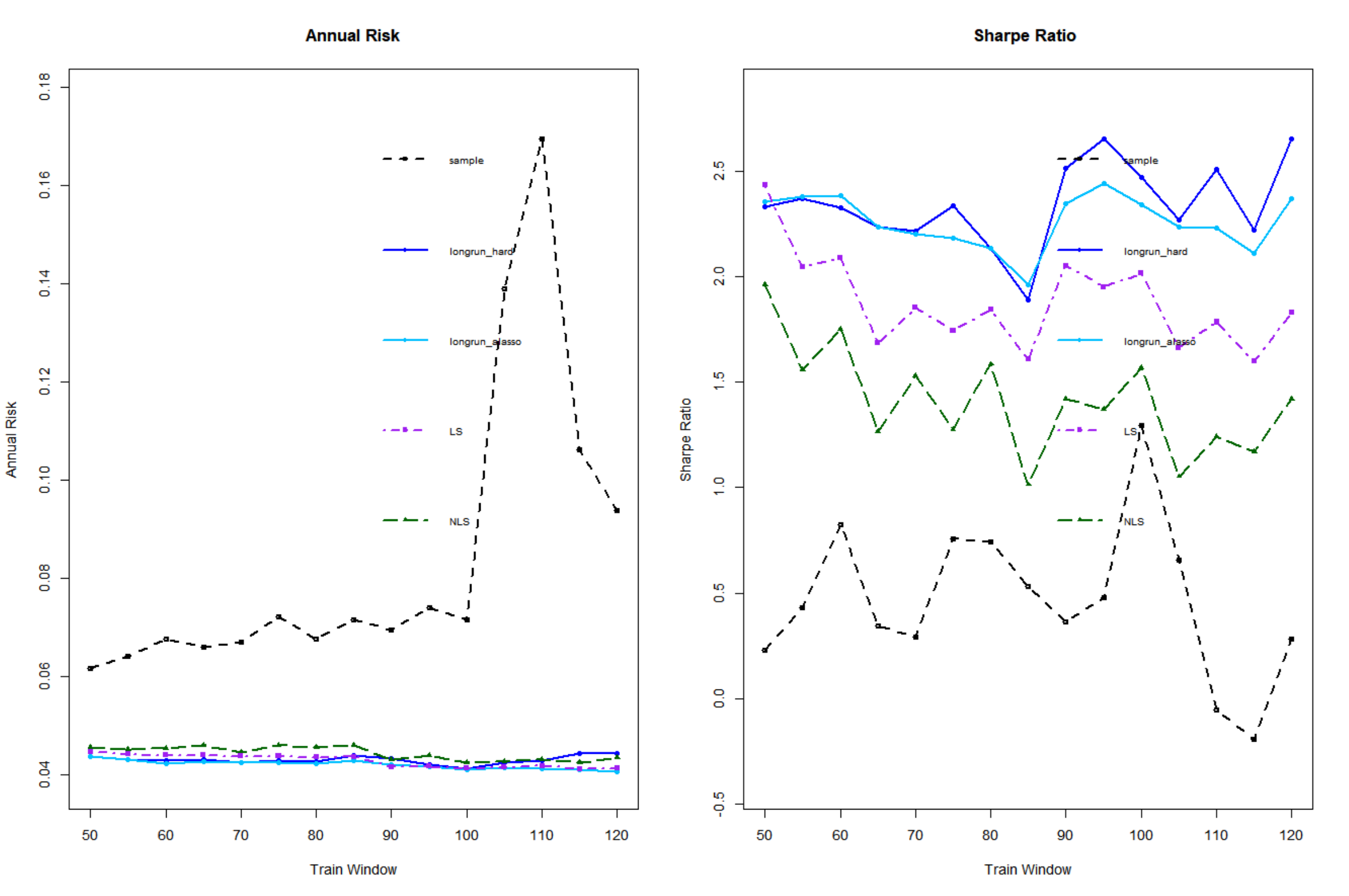}
    
    \caption{ Annualized risk and Sharpe ratio of mean--variance portfolios with a 10$\%$ annual target return.}
    \label{fig:MVP}
\end{figure}

Figure \ref{fig:MVP} reports the annualized risk and Sharpe ratio of the mean--variance portfolios for different training-window lengths. The proposed long-run-variance-based estimators show competitive out-of-sample performance compared with the sample covariance matrix and the shrinkage estimators. 
In particular, the adaptive-lasso version of the proposed estimator generally yields stable risk performance, while the hard-thresholding version remains competitive in terms of the Sharpe ratio.

\begin{figure}[htbp]
    \centering
    
        \includegraphics[width=\textwidth,height=0.4\textheight]{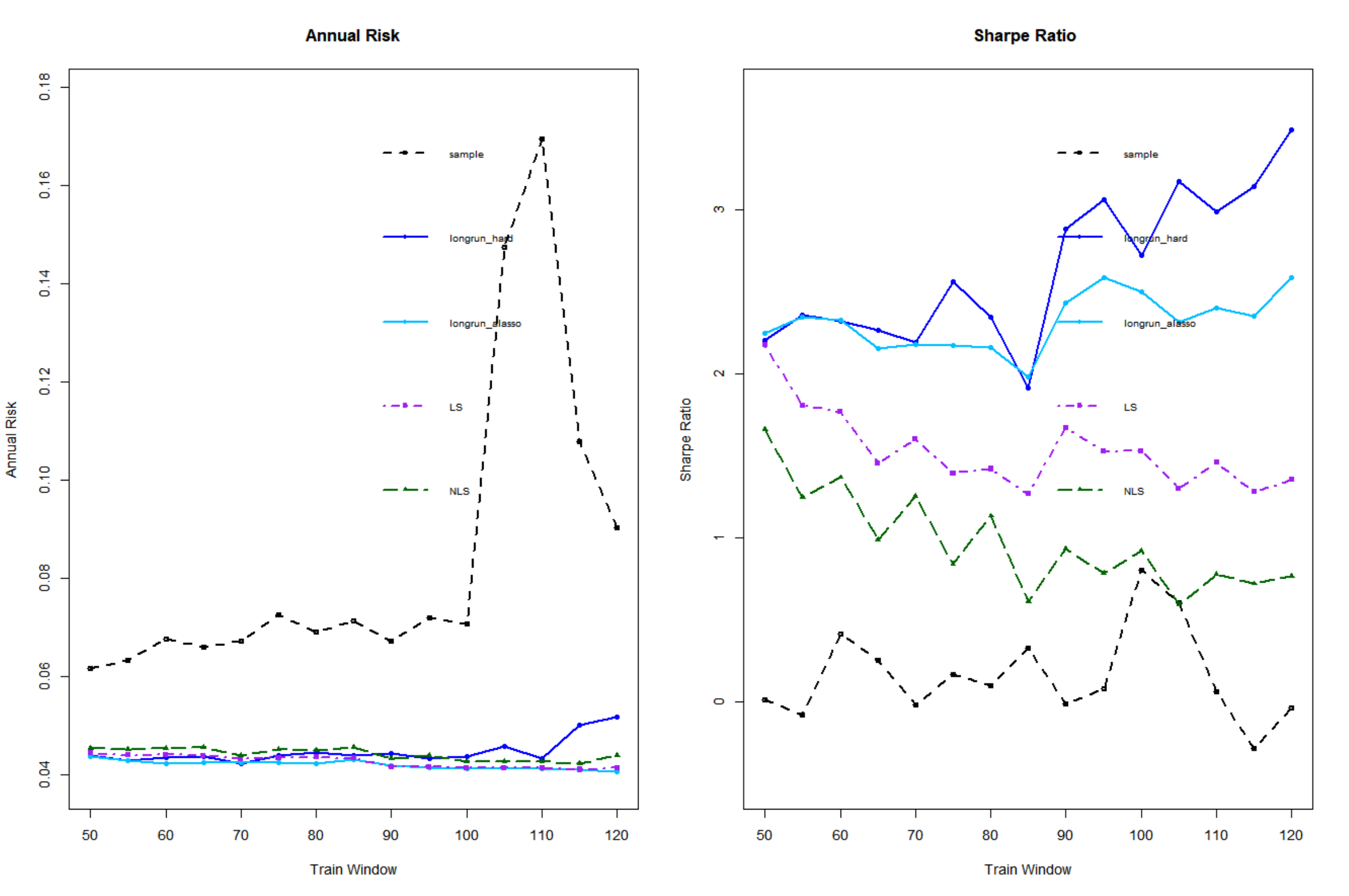}
    
    \caption{ Global minimum variance portfolio annualized risk and Sharpe ratio}
    \label{fig:GMVP}
\end{figure}

Figure \ref{fig:GMVP} presents the performance of global minimum variance portfolios under different covariance estimators. The long-run-variance-based estimator using the adaptive lasso rule achieves the lowest and most stable risk, while the hard-thresholding estimator performs marginally better than LS and NLS for moderate window lengths. In terms of the Sharpe ratio, the long-run-variance-based estimators consistently outperform the alternatives, with hard thresholding showing a slight advantage over lasso. Overall, the long-run-variance-based estimator using the adaptive lasso rule yields the most stable risk reduction, while the hard-thresholding estimator can be competitive in terms of the Sharpe ratio.

\section{Conclusion}\label{sec6}
In this paper, we develop a thresholding approach for covariance matrix estimation in high-dimensional time series. Theoretically, the proposed estimator is shown to achieve the spectral-norm convergence rate and to recover the support consistently under suitable signal-strength conditions. To address the challenges posed by temporal dependence, we incorporate long-run variance into the thresholding framework. Both simulation studies and applications to the SRBC gene-expression dataset and
S$\&$P 500 stock returns illustrate the competitive performance and practical usefulness
of the proposed method. Future work will focus on developing theoretically justified procedures for selecting regularization parameters.

\section{Appendix: Proofs}

We use \(C\) as a generic constant, the value of which may change at different places.

\begin{lemma}  \label{le1}
Let $\rho_1^{-1}=2\gamma_1^{-1}+\gamma_2^{-1}$, $\rho_2^{-1}=\gamma_1^{-1}+\gamma_2^{-1}$. If Assumptions \ref{ass3}(i) and \ref{ass4} hold, we have
\begin{equation}  \label{eq22}
\begin{split}
P(\max_{1\leq i\leq p,1\leq j\leq p}|\hat{\sigma}_{ij}-\sigma_{ij}|>x)&\leq Cp^2n\exp(-Cx^{\rho_1}n^{\rho_1})+Cp^2n\exp(-Cx^{\rho_2/2}n^{\rho_2})\\&+Cp^2\exp(-Cx^2n)+Cp^2\exp(-Cxn) ,
\end{split}
\end{equation}
for any $x>0$, such that $nx\to\infty$.
\end{lemma}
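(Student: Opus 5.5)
The plan is to decompose the centered sample covariance and apply a Bernstein-type inequality for $\alpha$-mixing sequences with sub-Weibull tails, namely the inequality of Merlevède, Peligrad and Rio (2011) as used in \citet{2019Gao}, together with a union bound over the $p^2$ pairs. Write
\[
\hat{\sigma}_{ij}-\sigma_{ij}=\Big(\frac1n\sum_{t=1}^n y_{it}y_{jt}-\sigma_{ij}\Big)-\bar y_i\bar y_j ,
\]
so that
\[
P(|\hat\sigma_{ij}-\sigma_{ij}|>x)\le P\Big(\Big|\frac1n\sum_{t=1}^n (z_{ij,t}-\sigma_{ij})\Big|>x/2\Big)+P(|\bar y_i|>\sqrt{x/2})+P(|\bar y_j|>\sqrt{x/2}).
\]
Taking the maximum over $(i,j)$ then costs at most a factor $p^2$ (or $p$ for the mean terms), which accounts for the $p^2$ prefactors in (\ref{eq22}).

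\textbf{Step 1: the product term.} By Assumption \ref{ass3}(i) and the elementary bound $|y_{it}y_{jt}|\le (y_{it}^2+y_{jt}^2)/2$, we get $P(|z_{ij,t}|>u)\le 2K_2\exp(-K_1 u^{\gamma_1/2})$. Hence $z_{ij,t}-\sigma_{ij}$ is sub-Weibull with index $\gamma_1/2$. Since $z_{ij,t}$ is a measurable function of $\boldsymbol y_t$, it inherits the mixing coefficients of Assumption \ref{ass4}. The Merlevède--Peligrad--Rio inequality for a tail index $\gamma$ with $\gamma^{-1}=2\gamma_1^{-1}+\gamma_2^{-1}=\rho_1^{-1}$ then gives, with threshold $nx/2$,
\[
P\Big(\Big|\sum_t (z_{ij,t}-\sigma_{ij})\Big|\ge nx\Big)\le n\exp\Big(-\frac{(nx)^{\rho_1}}{C}\Big)+\exp\Big(-\frac{n^2x^2}{C(n+nV)}\Big)+\exp\Big(-\frac{(nx)^2}{Cn}\exp\Big(\frac{(nx)^{\rho_1(1-\rho_1)}}{C(\log nx)^{\rho_1}}\Big)\Big).
\]
Here $V$ is bounded because of the summable mixing covariances (Davydov's inequality). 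The first term produces $Cn\exp(-Cx^{\rho_1}n^{\rho_1})$. The second produces $C\exp(-Cx^2n)$. The third is dominated by the second, and by the $\exp(-Cxn)$ term, once $nx\to\infty$, since the inner exponential factor exceeds one.

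\textbf{Step 2: the mean terms.} Each $y_{it}$ is sub-Weibull with index $\gamma_1$, so the same inequality applies with $\rho_2^{-1}=\gamma_1^{-1}+\gamma_2^{-1}$, now at threshold $n\sqrt{x/2}$. This yields
\[
Cn\exp\big(-C(n\sqrt x)^{\rho_2}\big)+C\exp(-Cnx)+\text{(dominated term)}.
\]
The exponent $(n\sqrt{x})^{\rho_2}=x^{\rho_2/2}n^{\rho_2}$ matches the second term of (\ref{eq22}), and $\exp(-C n^2 x/n)=\exp(-Cxn)$ matches the last term. Summing over $i\le p$, which is at most $p^2$, and adding Step 1 gives (\ref{eq22}).

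The main obstacle is verifying the hypotheses of the mixing Bernstein inequality uniformly in $(i,j)$ and $p$. Specifically, one must check that the constants in the tail bound for $z_{ij,t}$ do not depend on the pair, which follows from the uniform $K_1,K_2$. One must also check that the variance proxy $V=\sup_M\sup_{\text{blocks}}\mathrm{Var}(\cdot)/M$ is bounded uniformly. For the latter I would bound $|\mathrm{Cov}(z_{ij,1},z_{ij,1+k})|\le C\alpha(k)^{1-2/q}\|z\|_q^2$ and sum using the stretched-exponential decay. The inequality also requires $\gamma<1$ formally; if $\rho_1$ or $\rho_2$ equals $1$, I would replace the index by a slightly smaller value, which only changes constants. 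Finally, the dominated third terms must be absorbed using $nx\to\infty$.
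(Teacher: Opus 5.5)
Your proposal is correct and follows essentially the same route as the paper. The paper also splits $\hat\sigma_{ij}-\sigma_{ij}$ into the centered average of $z_{ij,t}$ plus mean-product terms ($I_2+I_3+I_4$, which collapse to your $-\bar y_i\bar y_j$), bounds the latter through $P(|\bar y_i|>\sqrt{x})$, applies Theorem 1 of \citet{2011Merlevede} with tail indices $\gamma_1/2$ and $\gamma_1$ (absorbing the third term into the second), and finishes with a union bound over the $p^2$ pairs. Your extra checks (a uniform bound on the variance proxy $V$ via Davydov, and $n\sqrt{x}\to\infty$ for the mean terms) fill in details the paper leaves implicit, and your caveat about $\rho_1$ or $\rho_2$ equal to one is moot, since $\gamma_1\le 2$ and $\gamma_2\le 1$ force $\rho_1\le 1/2$ and $\rho_2\le 2/3$.
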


\begin{proof}[\textnormal{\textbf{Proof}}]
\begin{equation}  \label{eq23}
\begin{split}
    \hat{\sigma}_{ij}-\sigma_{ij}&=\frac{1}{n}\sum_{t=1}^n(y_{it}y_{jt}-E(y_{it}y_{jt}))-\frac{1}{n}\sum_{t=1}^n y_{it}\bar{y}_{j}-\frac{1}{n}\sum_{t=1}^n y_{jt}\bar{y}_{i}+\bar{y}_{i}\bar{y}_{j}\\
    &=I_1+I_2+I_3+I_4.
\end{split}
\end{equation}
By Bonferroni inequality,
\begin{equation}  \label{eq24}
    P(\max_{i,j}|\hat{\sigma}_{ij}-\sigma_{ij}|>x)\leq \sum_{i=1}^p\sum_{j=1}^pP(|\hat{\sigma}_{ij}-\sigma_{ij}|>x)\leq Cp^2\sup_{ij}P(|\hat{\sigma}_{ij}-\sigma_{ij}|>x).
\end{equation}
Thus, it suffices to bound $P(|\hat{\sigma}_{ij}-\sigma_{ij}|>x)$ for for \(i,j\in\{1,2,\ldots,p\}\).. By the decomposition in (\ref{eq23}), we only need to bound each term in (\ref{eq23}).

Under Assumption \ref{ass3}(i), we have
\begin{equation}  \label{eq25}
\begin{split}
P(|y_{it}y_{jt}-E(y_{it}y_{jt})|>x)\leq C\exp(-Cx^{\gamma_1/2}).
\end{split}
\end{equation}
By Assumptions \ref{ass3}(i), \ref{ass4} and \citet{2011Merlevede} Theorem 1,
\begin{equation}  \label{eq26}
    P(|I_1|>x)\leq Cn\exp(-Cx^{\rho_1}n^{\rho_1})+C\exp(-Cx^2n),
\end{equation}
where we incorporate the third term in Theorem 1 of \citet{2011Merlevede} into the second one.

Similarly,
\begin{equation}  \label{eq27}
\begin{split}
P(|I_2|>x)&\leq P(|\bar{y}_{i}|>\sqrt{x})+P(|\bar{y}_{j}|>\sqrt{x})\\&\leq Cn\exp(-Cx^{\rho_2/2}n^{\rho_2})+C\exp(-Cxn).
\end{split}
\end{equation}
\begin{equation}  \label{eq28}
    P(|I_3|>x)\leq Cn\exp(-Cx^{\rho_2/2}n^{\rho_2})+C\exp(-Cxn).
\end{equation}
\begin{equation}  \label{eq29}
    P(|I_4|>x)\leq Cn\exp(-Cx^{\rho_2/2}n^{\rho_2})+C\exp(-Cxn).
\end{equation}
Thus, by (\ref{eq26})-(\ref{eq29}),
\begin{equation}  \label{eq30}
\begin{split}
P(|\hat{\sigma}_{ij}-\sigma_{ij}|>x)&\leq Cn\exp(-Cx^{\rho_1}n^{\rho_1})+Cn\exp(-Cx^{\rho_2/2}n^{\rho_2})\\&+C\exp(-Cx^2n)+C\exp(-Cxn). 
\end{split}
\end{equation}
Then follows from the Bonferroni inequality in (\ref{eq24}), this completes the proof.
\end{proof}

\begin{lemma}  \label{le2}
Let $\rho_3^{-1}=4\gamma_1^{-1}+\gamma_2^{-1}$. If Assumptions \ref{ass3}(i), \ref{ass4} and \ref{ass5} hold, we have
\begin{equation}  \label{eq31}
\begin{split}
P(\max_{ij}|\widehat{\theta}_{ij}-\widetilde{\theta}_{ij}|>x)&\leq Cp^2n^2\exp(-C(\frac{nx}{b_n})^{\rho_3})+Cp^2n\exp(-C\frac{nx^2}{b_n^2})\\&+Cp^2n^2\exp(-Cn^{\rho_1}(\frac{x}{b_n})^{\rho_1/2})+Cp^2n\exp(-C\frac{nx}{b_n}).        
\end{split}
\end{equation}
for any $x>0$ such that $nx\to\infty$, where $\widetilde{\theta}_{ij}=\sum_{k=-n+1}^{n-1}\mathcal{K}(\frac{k}{b_n})\widetilde{\Gamma}_{ij}(k)$ with 
\begin{equation}  \label{eq32}
\widetilde{\Gamma}_{ij}(k)=\begin{cases}
    \frac{1}{n}\sum_{t=k+1}^nE[z_{ij,t}-\sigma_{ij}][z_{ij,t-k}-\sigma_{ij}]\quad \text{if $k\geq0$.}\\ \frac{1}{n}\sum_{t=1-k}^nE[z_{ij,t+k}-\sigma_{ij}][z_{ij,t}-\sigma_{ij}]\quad \text{if $k<0$.}   
\end{cases}
\end{equation}
As a result, if $\log p=o\{min(\frac{n}{b_n^2},n^{\rho_3/(2-\rho_3)})\}$ with a bandwidth $b_n^2=o(n)$, $\max_{ij}|\widehat{\theta}_{ij}-\theta_{ij}|\xrightarrow{p}0$ as $n\to \infty$.
\end{lemma}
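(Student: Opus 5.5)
The plan is to split $\hat\theta_{ij}-\tilde\theta_{ij}$ into a stochastic part and a centering part, then bound the bias $\tilde\theta_{ij}-\theta_{ij}$ separately to get the consistency statement.

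\textbf{Decomposition.} Write $\hat\Gamma_{ij}(k)=\check\Gamma_{ij}(k)+R_{ij}(k)$. Here $\check\Gamma_{ij}(k)=n^{-1}\sum_{t}(z_{ij,t}-\sigma_{ij})(z_{ij,t-k}-\sigma_{ij})$ is centered at the true mean. The remainder $R_{ij}(k)$ collects the terms from replacing $\sigma_{ij}$ by $\bar z_{ij}$; these are products of $(\bar z_{ij}-\sigma_{ij})$ with partial sums of $z_{ij,t}-\sigma_{ij}$, plus $(\bar z_{ij}-\sigma_{ij})^2$. The centering by $\bar y$ inside $z$, if present, is handled the same way as in Lemma~\ref{le1}. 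Then
\[
|\hat\theta_{ij}-\tilde\theta_{ij}|\le \sum_{|k|<n}|\mathcal K(k/b_n)|\,\bigl|\check\Gamma_{ij}(k)-\tilde\Gamma_{ij}(k)\bigr|+\sum_{|k|<n}|\mathcal K(k/b_n)|\,|R_{ij}(k)|.
\]
By Assumption~\ref{ass5}, $\sum_{|k|<n}|\mathcal K(k/b_n)|\le Cb_n$. So if $\max_k|\cdot|>x$ fails at level $x/(Cb_n)$ for every $k$, the sum is at most $x$. A union bound over the $2n-1$ lags and the $p^2$ pairs then produces the prefactors $p^2n$ and, after a second union over $t$ inside Merlevède et al., $p^2n^2$. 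The whole problem therefore reduces to single-lag tail bounds at level $x/b_n$.

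\textbf{Single-lag bounds.} For fixed $k$, the sequence $w_t=(z_{ij,t}-\sigma_{ij})(z_{ij,t-k}-\sigma_{ij})-E(\cdot)$ is a product of four $y$'s. By Assumption~\ref{ass3}(i), $P(|w_t|>u)\le C\exp(-Cu^{\gamma_1/4})$. The process $w_t$ is $\alpha$-mixing with coefficients $\alpha(m-k)$ for $m>k$. To avoid that shift, split the time indices into residue classes modulo $k+1$, or simply bound the mixing coefficient by $\exp(-K_3 m^{\gamma_2})$ after lagging, which costs another factor of $n$. Applying Theorem 1 of \citet{2011Merlevede} with tail exponent $\gamma_1/4$ and mixing exponent $\gamma_2$ gives rate $\rho_3$:
\[
P\bigl(|\check\Gamma_{ij}(k)-\tilde\Gamma_{ij}(k)|>x/b_n\bigr)\le Cn\exp\bigl(-C(nx/b_n)^{\rho_3}\bigr)+C\exp\bigl(-Cnx^2/b_n^2\bigr).
\]
These are the first two terms of (\ref{eq31}). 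For $R_{ij}(k)$, a bound $|R_{ij}(k)|>x/b_n$ forces either $|\bar z_{ij}-\sigma_{ij}|>\sqrt{x/b_n}$ or a comparable event for a partial sum. Lemma~\ref{le1}'s bound (\ref{eq26}) at level $\sqrt{x/b_n}$ then gives $Cn\exp(-Cn^{\rho_1}(x/b_n)^{\rho_1/2})+C\exp(-Cnx/b_n)$, which are the last two terms.

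\textbf{Consistency.} Set $x=\epsilon$ fixed. The condition $\log p=o(n/b_n^2)$ kills the Gaussian terms. For the term $p^2n^2\exp(-C(n/b_n)^{\rho_3})$, use $b_n=o(n^{1/2})$, so $(n/b_n)^{\rho_3}\ge n^{\rho_3/2}$. The assumption $\log p=o(n^{\rho_3/(2-\rho_3)})$ then dominates $\log p+\log n$, matching the exponent bookkeeping in the Merlevède bound with the third term absorbed. It remains to show that $\max_{ij}|\tilde\theta_{ij}-\theta_{ij}|\to0$, which is a deterministic bias. The covariance inequality for $\alpha$-mixing with the moment bounds gives $|\mathrm{Cov}(z_t,z_{t-k})|\le C\exp(-Ck^{\gamma_2})$ uniformly in $(i,j)$. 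Hence $\theta_{ij}=\sum_k\Gamma_{ij}(k)$ converges absolutely and uniformly. The difference $\tilde\theta_{ij}-\theta_{ij}$ is then bounded by three pieces: $\sum_k|1-\mathcal K(k/b_n)||\Gamma(k)|$, which tends to $0$ by continuity of $\mathcal K$ at $0$, $b_n\to\infty$ and dominated convergence; the term $\sum_k(|k|/n)|\Gamma(k)|\to0$; and tail sums.

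\textbf{Main obstacle.} The delicate step is the single-lag exponential inequality for $w_t$ uniformly in $k$ up to $n$. The lag-$k$ product's mixing rate degrades with $k$, and the exponent must come out as exactly $\rho_3$. I would handle this by the residue-class blocking, which is where the extra factor of $n$ comes from.
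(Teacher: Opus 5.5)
Your overall architecture matches the paper's proof. Your split $\hat\Gamma_{ij}(k)=\check\Gamma_{ij}(k)+R_{ij}(k)$ is the paper's $J_1$/$J_2$ decomposition in (\ref{eq34}). The reduction to single-lag bounds at level $x/b_n$ via $\sum_{|k|<n}|\mathcal K(k/b_n)|\le Cb_n$, the union bounds over lags and pairs, Lemma~\ref{le1} at level $\sqrt{x/b_n}$ for the mean-correction term, and the bias argument (continuity of $\mathcal K$ at $0$ plus a Davydov-type bound) also all agree.

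The gap is in the step you yourself flag. Residue classes modulo $k+1$ do make each subsequence $\{w_{r+m(k+1)}\}_m$ $\alpha$-mixing at a $k$-free rate. However, each class then has only $N\approx (n-k)/(k+1)$ terms, and the threshold $nx/b_n$ must be shared among $k+1$ classes. Applying the inequality of \citet{2011Merlevede} to one class gives
\[
N\exp\Bigl\{-C\Bigl(\frac{nx}{(k+1)b_n}\Bigr)^{\rho_3}\Bigr\}+\exp\Bigl\{-C\frac{nx^2}{(k+1)b_n^2}\Bigr\}.
\]
So the price is a factor $k+1$ inside both exponents, not an extra polynomial prefactor. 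Since (\ref{eq31}) needs the single-lag bound uniformly in $|k|<n$, this fails:
\begin{itemize}
\item At $k\asymp b_n$, which carries kernel weight for every admissible $\mathcal K$, the Gaussian term becomes $\exp(-Cnx^2/b_n^3)$. This is not killed by $\log p=o(n/b_n^2)$.
\item At $k\asymp n$, which the max-over-lags argument must cover for non-compactly supported kernels such as QS, the bound is vacuous.
\end{itemize}
Your fallback, bounding the mixing coefficient by $\exp(-K_3m^{\gamma_2})$ ``after lagging'', is not available either. For $m\le k$ the mixing coefficients of $w_t$ need not be small at all, and no extra factor of $n$ repairs that.

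The paper's device instead keeps the blocks contiguous. It cuts $\{k+1,\dots,n\}$ into consecutive blocks of length $q_k=\max\{1,\lfloor k/2\rfloor\}$ and assigns them cyclically to three groups. Within a group, successive blocks are about $k$ time points apart, so the block sums are $\alpha$-mixing with coefficients $\alpha_p((3h-1)q_k+1-k)\le C\exp(-ch^{\gamma_2})$, uniformly in $k$. Each group still contains about a third of all $n-k$ summands, so only a factor $3$ is lost in the threshold, and the variance term in (\ref{eq35}) stays at $\exp(-Cnx^2/b_n^2)$. The paper's display (\ref{eq35}) is terse about the tails of these block sums, whose scale grows with $q_k$, so that step also needs care. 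Still, the residue-class route already breaks at the variance term.

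A smaller error is in your consistency bookkeeping. From $b_n=o(n^{1/2})$ you only get $(n/b_n)^{\rho_3}\ge n^{\rho_3/2}$. Since $\rho_3/2<\rho_3/(2-\rho_3)$, the hypothesis $\log p=o(n^{\rho_3/(2-\rho_3)})$ does not dominate this. Use $b_n^2\log p=o(n)$ instead. Then $n/b_n\gg\sqrt{n\log p}$, hence $(n/b_n)^{\rho_3}\gg (n\log p)^{\rho_3/2}$. Finally, $(n\log p)^{\rho_3/2}/\log p\to\infty$ is exactly equivalent to $\log p=o(n^{\rho_3/(2-\rho_3)})$, which is where that exponent comes from.
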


\begin{proof}[\textnormal{\textbf{Proof}}]
By a similar argument with Lemma \ref{le1},  we only need to bound $P(|\widehat{\theta}_{ij}-\widetilde{\theta}_{ij}|>x)$ for $1\leq i,j\leq p$.

Note that \[\widehat{\theta}_{ij}-\widetilde{\theta}_{ij}=\sum_{k=-n+1}^{n-1}\mathcal{K}(\frac{k}{b_n})[\widehat{\Gamma}_{ij}(k)-\widetilde{\Gamma}_{ij}(k)],\]
where $\widehat{\Gamma}_{ij}(k)$ is defined in (\ref{eq4}) and $\widetilde{\Gamma}_{ij}(k)$ is defined in (\ref{eq32}). For simplicity, we only focus on the case when $k>0$ since the other part is similar. We denote \[a_{ij,t}=z_{ij,t}-\sigma_{ij}.\] After some elementary calculations, we have 
\begin{equation}  \label{eq33}
\begin{split}
& \left|\sum_{k=0}^{n-1} \mathcal{K}\left(\frac{k}{b_n}\right) \frac{1}{n} \sum_{t=k+1}^n\left[z_{i j, t}-\bar{z}_{i j}\right]\left[z_{i j, t-k}-\bar{z}_{i j}\right]-E\left[z_{i j, t}-\sigma_{i j}\right]\left[z_{i j, t-k}-\sigma_{i j}\right]\right| \\&=\left|\sum_{k=0}^{n-1} \mathcal{K}\left(\frac{k}{b_n}\right) \frac{1}{n} \sum_{t=k+1}^n\left[a_{i j, t}-\bar{a}_{i j}\right]\left[a_{i j, t-k}-\bar{a}_{i j}\right]-E\left[a_{i j, t}a_{i j, t-k}\right]\right|
\\&=\left|\sum_{k=0}^{n-1} \mathcal{K}\left(\frac{k}{b_n}\right) \frac{1}{n} \sum_{t=k+1}^n(a_{ij,t}a_{ij,t-k})-E(a_{ij,t}a_{ij,t-k})-\bar{a}_{ij}(a_{ij,t}+a_{ij,t-k}-\bar{a}_{ij})\right|
\\&\leq \left|\sum_{k=0}^{n-1} \mathcal{K}\left(\frac{k}{b_n}\right) \frac{1}{n} \sum_{t=k+1}^n(a_{ij,t}a_{ij,t-k})-E(a_{ij,t}a_{ij,t-k})\right|\\&+\left|\sum_{k=0}^{n-1} \mathcal{K}\left(\frac{k}{b_n}\right) \frac{1}{n} \sum_{t=k+1}^n\bar{a}_{ij}(a_{ij,t}+a_{ij,t-k}-\bar{a}_{ij})\right|
\\&\leq  b_n \max _{0 \leq k \leq n-1}\left|\frac{1}{n} \sum_{t=k+1}^n(a_{i j, t}a_{i j, t-k})-E(a_{i j, t}a_{i j, t-k})\right| \frac{1}{b_n} \sum_{k=0}^{n-1} |\mathcal{K}\left(\frac{k}{b_n}\right)| \\
& +C b_n \max _{0 \leq k \leq n-1}\left|\frac{1}{n} \sum_{t=k+1}^n a_{i j, t}\right|^2 \frac{1}{b_n} \sum_{k=0}^{n-1} |\mathcal{K}\left(\frac{k}{b_n}\right)|,
    \end{split}
\end{equation}
by Assumption \ref{ass5}, $\frac{1}{b_n} \sum_{k=0}^{n-1}| \mathcal{K}\left(\frac{k}{b_n}\right)|<\int|\mathcal{K}(x)|dx<C$. Then
\begin{equation}  \label{eq34}
    \begin{split}
        |\sum_{k=-n+1}^{n-1}\mathcal{K}(\frac{k}{b_n})[\widehat{\Gamma}_{ij}(k)-\widetilde{\Gamma}_{ij}(k)]|
        &\leq C b_n \max _{0 \leq k \leq n-1}\left|\frac{1}{n} \sum_{t=k+1}^n(a_{i j, t}a_{i j, t-k})-E(a_{i j, t}a_{i j, t-k})\right|  \\
& +C b_n \max _{0 \leq k \leq n-1}\left|\frac{1}{n} \sum_{t=k+1}^n a_{i j, t}\right|^2 
\\&=J_1+J_2.
    \end{split}
\end{equation}
Let $\eta_{ij,tk}=a_{ij,t}a_{ij,t-k}$. It follows that \[\sup_{i,j}\sup_{t,k}P(|
\eta_{ij,tk}-E(\eta_{ij,tk})|>x)\leq C\exp(-Cx^{\gamma_1/4}).\]
For $\sum_{t=k+1}^n
\eta_{ij,tk}-E(\eta_{ij,tk})$, we can always split it for three parts: For fixed \(k>0\), define
\[
\eta_{ij,t}^{(k)}
=
a_{ij,t}a_{ij,t-k}
-
E(a_{ij,t}a_{ij,t-k}),
\qquad t=k+1,\ldots,n .
\]
Let
\[
q_k=\max\left\{1,\left\lfloor \frac{k}{2}\right\rfloor\right\},
\qquad
N_k=\left\lceil \frac{n-k}{q_k}\right\rceil .
\]
Partition the index set \(\{k+1,\ldots,n\}\) into consecutive blocks
\[
B_{k,\ell}
=
\left\{
k+(\ell-1)q_k+1,\ldots,
\min(k+\ell q_k,n)
\right\},
\qquad
\ell=1,\ldots,N_k .
\]
We divide these blocks into three groups cyclically. Specifically, define
\[
\mathcal I_{k,r}
=
\bigcup_{\ell\ge0:\,3\ell+r\le N_k}
B_{k,3\ell+r},
\qquad r=1,2,3 .
\]
Then
\[
\sum_{t=k+1}^{n}\eta_{ij,t}^{(k)}
=
\sum_{r=1}^{3}
\sum_{t\in\mathcal I_{k,r}}
\eta_{ij,t}^{(k)} .
\]
Consequently,
\[
P\left(
\left|
\sum_{t=k+1}^{n}\eta_{ij,t}^{(k)}
\right|>nx
\right)
\le
\sum_{r=1}^{3}
P\left(
\left|
\sum_{t\in\mathcal I_{k,r}}
\eta_{ij,t}^{(k)}
\right|>\frac{nx}{3}
\right).
\] For fixed \(k\) and \(r=1,2,3\), define
\[
S_{ij,k}^{(r)}
=
\sum_{t\in I_{k,r}}\eta_{ij,t}^{(k)}
=
\sum_m U_{k,m}^{(r)} .
\]
Since
\[
\sum_{t=k+1}^{n}\eta_{ij,t}^{(k)}
=
\sum_{r=1}^{3}S_{ij,k}^{(r)},
\]
we have
\[
P\left(
\left|
\sum_{t=k+1}^{n}\eta_{ij,t}^{(k)}
\right|
>
Cnx/b_n
\right)
\le
\sum_{r=1}^{3}
P\left(
|S_{ij,k}^{(r)}|
>
Cnx/(3b_n)
\right).
\]
Moreover, for each fixed \(r\), the block-sum sequence
\(\{U_{k,m}^{(r)}\}_{m\ge0}\) is strongly mixing, with mixing coefficients bounded by
\[
\alpha_U^{(k,r)}(h)
\le
\alpha_p\left((3h-1)q_k+1-k\right)
\le
C\exp(-c h^{\gamma_2}).
\]
Therefore, by Theorem of \citet{2011Merlevede},
\begin{equation}  \label{eq35}
\begin{split}
        P(|\sum_{t=k+1}^n
\eta_{ij,tk}-E(\eta_{ij,tk})|>Cnxb_n^{-1})&\leq \sum_{r=1}^{3}P\left(|S_{ij,k}^{(r)}|>Cnx/(3b_n)\right)\\& \leq
CN_k\exp(-C(\frac{nx}{b_n})^{\rho_3})+C\exp(-C\frac{n^2x^2}{N_kb_n^2})
\\&\leq Cn\exp(-C(\frac{nx}{b_n})^{\rho_3})+C\exp(-C\frac{nx^2}{b_n^2}).
\end{split}
\end{equation}
By Bonferroni inequality, 
\begin{equation}  \label{eq36}
    P(J_1>x)\leq CnP(|\sum_{t=k+1}^n
\eta_{ij,tk}-E(\eta_{ij,tk})|>Cnxb_n^{-1})\leq Cn^2\exp(-C(\frac{nx}{b_n})^{\rho_3})+Cn\exp(-C\frac{nx^2}{b_n^2}).
\end{equation}
Similarly,
\begin{equation}  \label{eq37}
    P(J_2>x)\leq CnP(|\sum_{t=k+1}^n
a_{ij,t}|>Cn\sqrt{xb_n^{-1}})\leq Cn^2\exp(-C(\frac{n^2x}{b_n})^{\rho_1/2})+Cn\exp(-C\frac{nx}{b_n}).
\end{equation}
For the second part, if $\log p=o\{\min(\frac{n}{b_n^2},n^{\rho_3/(2-\rho_3)})\}$ with a bandwidth $b_n^2=o(n)$, we have $\max_{ij}|\widehat{\theta}_{ij}-\widetilde{\theta}_{ij}|=o_p(1)$. For the second part, we also consider $k\geq 0$, 
\begin{align*}
    \begin{split}
        |\theta_{ij}-\widetilde{\theta}_{ij}|&=|\sum_{k=0}^{n-1}\frac{n-k}{n}(\mathcal{K}(\frac{k}{b_n})\gamma_{z_{ij}}(k)-\gamma_{z_{ij}}(k))| \\&\leq |\sum_{k=0}^{M}\frac{n-k}{n}(\mathcal{K}(\frac{k}{b_n})\gamma_{z_{ij}}(k)-\gamma_{z_{ij}}(k))|+|\sum_{k=M+1}^{n-1}\frac{n-k}{n}(\mathcal{K}(\frac{k}{b_n})\gamma_{z_{ij}}(k)-\gamma_{z_{ij}}(k))|.
    \end{split}
\end{align*}
For a sufficiently large constant $M$, $|\sum_{k=0}^{M}(\frac{n-k}{n}\mathcal{K}(\frac{k}{b_n})\gamma_{z_{ij}}(k)-\gamma_{z_{ij}}(k))|\to 0$. Since $\mathcal{K}(x)\leq C$, according to Davydov inequality, Assumptions \ref{ass3}(i) and \ref{ass4}, $|\sum_{k=M+1}^{n-1}\frac{n-k}{n}(\mathcal{K}(\frac{k}{b_n})\gamma_{z_{ij}}(k)-\gamma_{z_{ij}}(k))|\to 0$. So we have  $\max_{ij}|\theta_{ij}-\widetilde{\theta}_{ij}|=o(1)$.

By the triangle inequality, this completes the proof.
\end{proof}

\begin{proof}[\textnormal{\textbf{Proof of Theorem 1}}]
    In the event that $E=\{\max_{ij}|\hat{\theta}_{ij}-\theta_{ij}|\leq C_{L}/2 \}$, for simplicity, we let $u_{ij}=\delta\sqrt{\frac{\hat{\theta}_{ij}\log p}{n}}$, where $\delta$ is sufficiently large, note that 
    \begin{equation}  \label{eq38}
        ||\widehat{\boldsymbol{\Sigma}}_y^*-\boldsymbol{\Sigma}_y||_2\leq ||\widehat{\boldsymbol{\Sigma}}_y^*-\boldsymbol{\Sigma}_y^*||_2+||\boldsymbol{\Sigma}_y^*-\boldsymbol{\Sigma}_y||_2,
    \end{equation}
    where $\boldsymbol{\Sigma}_y^*$ is defined similarly as  $\widehat{\boldsymbol{\Sigma}}_y^*$ by thresholding the true covariance matrix. We now consider the first term. For any matrix $\textbf{A}$, we have $||\textbf{A}||_2\leq (||\textbf{A}||_1||\textbf{A}||_\infty)^{\frac{1}{2}}$, thus
\begin{equation}  \label{eq39}
\begin{split}
    ||\widehat{\boldsymbol{\Sigma}}^*_y-\boldsymbol{\Sigma}^*_y||_2&\leq[\max_{1\leq j\leq p}\sum_{i=1}^p|s_{u_{ij}}(\hat{\sigma}_{ij})-s_{u_{ij}}(\sigma_{ij})|]^{\frac{1}{2}} \\&\times [\max_{1\leq i\leq p}\sum_{j=1}^p|s_{u_{ij}}(\hat{\sigma}_{ij})-s_{u_{ij}}(\sigma_{ij})|]^{\frac{1}{2}} \\&=I_5^{\frac{1}{2}}\times I_6^{\frac{1}{2}}.
\end{split}
\end{equation}
    By the thresholding function conditions (1) and (2),
\begin{equation}  \label{eq40}
\begin{split}
        I_5&\leq \max_{1\leq j\leq p}\sum_{i=1}^p|\hat{\sigma}_{ij}-\sigma_{ij}|I(|\hat{\sigma}_{ij}|\geq u_{ij},|\sigma_{ij}|\geq u_{ij})
        + \max_{1\leq j\leq p}\sum_{i=1}^p|\hat{\sigma}_{ij}|I(|\hat{\sigma}_{ij}|\geq u_{ij},|\sigma_{ij}|< u_{ij})\\&+\max_{1\leq j\leq p}\sum_{i=1}^p|\sigma_{ij}|I(|\hat{\sigma}_{ij}|< u_{ij},|\sigma_{ij}|\geq u_{ij})\\&+\max_{1\leq j\leq p}\sum_{i=1}^p(|s_{u_{ij}}(\hat{\sigma}_{ij})-\hat{\sigma}_{ij}|+|s_{u_{ij}}(\sigma_{ij})-\sigma_{ij}|)I(|\hat{\sigma}_{ij}|\geq u_{ij},|\sigma_{ij}|\geq u_{ij})
        \\&=I_{5,1}+I_{5,2}+I_{5,3}+I_{5,4}.
\end{split} 
\end{equation}
Let $Z=\max_{1\leq i,j\leq p}|\hat{\sigma}_{ij}-\sigma_{ij}|$, by Lemma \ref{le1}, $Z=O_p(\sqrt{\log p/n})$, by Lemma \ref{le2},
\begin{equation}  \label{eq41}
\begin{split}
    I_{5,1}&\leq \max_{1\leq i,j\leq p}|\hat{\sigma}_{ij}-\sigma_{ij}|\times\max_j\sum_{i=1}^pI(|\hat{\sigma}_{ij}|\geq u_{ij},|\sigma_{ij}|\geq u_{ij})
    \\&\leq Z\times \max_j\sum_{i=1}^p|\sigma_{ij}|^\iota u_{ij}^{-\iota}\leq Cs_1(\frac{\log p}{n})^{\frac{1-\iota}{2}}.
\end{split}
\end{equation}
For $I_{5,2}$, By the triangle inequality and Lemma \ref{le1}, \ref{le2},
\begin{equation}  \label{eq42}
\begin{split}
    I_{5,2}&\leq \max_j\sum_{i=1}^p|\hat{\sigma}_{ij}-\sigma_{ij}|I(|\hat{\sigma}_{ij}|\geq u_{ij},|\sigma_{ij}|< u_{ij})+\max_j\sum_{i=1}^p|\sigma_{ij}|I(|\sigma_{ij}|<u_{ij})
    \\&\leq \max_j\sum_{i=1}^p|\hat{\sigma}_{ij}-\sigma_{ij}|I(|\hat{\sigma}_{ij}|\geq u_{ij},|\sigma_{ij}|< u_{ij})+\max_j\sum_{i=1}^p|\sigma_{ij}||\frac{u_{ij}}{\sigma_{ij}}|^{1-\iota}
    \\&=\max_j\sum_{i=1}^p|\hat{\sigma}_{ij}-\sigma_{ij}|I(|\hat{\sigma}_{ij}|\geq u_{ij},|\sigma_{ij}|< u_{ij})+Cs_1(\frac{\log p}{n})^{\frac{1-\iota}{2}}.
\end{split}
\end{equation}
Taking $\alpha\in(0,1)$, by the triangle inequality and Lemma \ref{le1}, \ref{le2}, the first term satisfies
\begin{equation}  \label{eq43}
\begin{split}
    &\max_j\sum_{i=1}^p|\hat{\sigma}_{ij}-\sigma_{ij}|I(|\hat{\sigma}_{ij}|\geq u_{ij},|\sigma_{ij}|< u_{ij})
    \\&\leq\max_j\sum_{i=1}^p|\hat{\sigma}_{ij}-\sigma_{ij}|I(|\hat{\sigma}_{ij}|\geq u_{ij},|\sigma_{ij}|< \alpha u_{ij})
    \\&\quad+\max_j\sum_{i=1}^p|\hat{\sigma}_{ij}-\sigma_{ij}|I(|\hat{\sigma}_{ij}|\geq u_{ij},\alpha u_{ij}\leq|\sigma_{ij}|< u_{ij})
    \\&\leq Z\times\max_j\sum_{i=1}^pI(|\hat{\sigma}_{ij}-\sigma_{ij}|\geq (1-\alpha)u_{ij})+Z\times \max_j\sum_{i=1}^p|\sigma_{ij}|^\iota(\alpha u_{ij})^{-\iota}
    \\&\leq Z\times\max_j\sum_{i=1}^pI(|\hat{\sigma}_{ij}-\sigma_{ij}|\geq (1-\alpha)u_{ij})+Cs_1(\frac{\log p}{n})^{\frac{1-\iota}{2}}.
\end{split}
\end{equation}
On the other hand, for any $\delta_1>1$, for a sufficiently large $\delta$, by Markov inequality and Lemma \ref{le1},
\begin{equation}  \label{eq44}
\begin{split}
    &P(\max_j\sum_{i=1}^pI(|\hat{\sigma}_{ij}-\sigma_{ij}|\geq (1-\alpha)u_{ij})>\delta_1)
    \\&\leq \delta_1^{-1}\sum_{i,j=1}^pP(|\hat{\sigma}_{ij}-\sigma_{ij}|>(1-\alpha)u_{ij})=O(p^{-c}).
\end{split}
\end{equation}
By (\ref{eq42}), (\ref{eq43}) and (\ref{eq44}),
\begin{equation}  \label{eq45}
    I_{5,2}\leq C(\frac{\log p}{n})^{\frac{1}{2}}+Cs_1(\frac{\log p}{n})^{\frac{1-\iota}{2}}=O_p(s_1(\frac{\log p}{n})^{\frac{1-\iota}{2}}).
\end{equation}

For $I_{5,3}$, by the triangle inequality and Lemma \ref{le1}, \ref{le2},
\begin{equation}  \label{eq46}
\begin{split}
    I_{5,3}&\leq \max_{1\leq j\leq p}\sum_{i=1}^p|\hat{\sigma}_{ij}-\sigma_{ij}|I(|\hat{\sigma}_{ij}|< u_{ij},|\sigma_{ij}|\geq u_{ij}) + \max_{1\leq j\leq p}\sum_{i=1}^p|\hat{\sigma}_{ij}|I(|\hat{\sigma}_{ij}|< u_{ij},|\sigma_{ij}|\geq u_{ij}) \\&\leq Z\times \max_j\sum_{i=1}^pI(|\sigma_{ij}|>u_{ij})+\max_{1\leq j\leq p}\sum_{i=1}^p|\sigma_{ij}|^{\iota}u_{ij}^{1-\iota}I(|\hat{\sigma}_{ij}|< u_{ij},|\sigma_{ij}|\geq u_{ij})\\&\leq Z\times \max_j\sum_{i=1}^p|\frac{\sigma_{ij}}{u_{ij}}|^\iota+Cs_1(\frac{\log p}{n})^{\frac{1-\iota}{2}}
    \leq Cs_1(\frac{\log p}{n})^{\frac{1-\iota}{2}}.
\end{split}
\end{equation}

For $I_{5,4}$, applying condition (3), 
\begin{equation}
\begin{split}
    &\max_{1\leq j\leq p}\sum_{i=1}^p(|s_{u_{ij}}(\hat{\sigma}_{ij})-\hat{\sigma}_{ij}|+|s_{u_{ij}}(\sigma_{ij})-\sigma_{ij}|)I(|\hat{\sigma}_{ij}|\geq u_{ij},|\sigma_{ij}|\geq u_{ij}) \\& \leq\max_{1\leq j\leq p}2\sum_{i=1}^p u_{ij}^{\iota}u_{ij}^{1-\iota}I(|\hat{\sigma}_{ij}|\geq u_{ij},|\sigma_{ij}|\geq u_{ij})\\& \leq \max_{1\leq j\leq p}2\sum_{i=1}^p |\sigma_{ij}|^{\iota}u_{ij}^{1-\iota} \leq   Cs_1(\frac{logp}{n})^{\frac{1-\iota}{2}}
\end{split}
\end{equation}

Together with $I_{5,1}$, $I_{5,2}$, $I_{5,3}$ and $I_{5,4}$, we have $I_5=O_p(s_1(\frac{\log p}{n})^{\frac{1-\iota}{2}})$. By a similar argument as $I_5$, we can show that $I_6=O_p(s_1(\frac{\log p}{n})^{\frac{1-\iota}{2}})$.

Now, turn to the second term. Note that
\begin{equation}  \label{eq47}
    ||\boldsymbol{\Sigma}_y^*-\boldsymbol{\Sigma}_y||_2\leq [\max_i\sum_{j=1}^p|s_{u_{ij}}(\sigma_{ij})-\sigma_{ij}|]^{\frac{1}{2}}\times[\max_j\sum_{i=1}^p|s_{u_{ij}}(\sigma_{ij})-\sigma_{ij}|]^{\frac{1}{2}}.
\end{equation}
Applying the thresholding function conditions (2), (3) and  Assumption \ref{ass1}, 
\begin{align*}
    &\max_i\sum_{j=1}^p|s_{u_{ij}}(\sigma_{ij})-\sigma_{ij}|\leq \max_i\sum_{j=1}^p |\sigma_{ij}|I(|\sigma_{ij}|\leq u_{ij})+\max_i\sum_{j=1}^p u_{ij}I(|\sigma_{ij}|> u_{ij})\\& =\max_i\sum_{j=1}^p |\sigma_{ij}|^{\iota}|\sigma_{ij}|^{1-\iota}I(|\sigma_{ij}|\leq u_{ij})+\max_i\sum_{j=1}^p u_{ij}^{\iota}u_{ij}^{1-\iota}I(|\sigma_{ij}|> u_{ij}) \\& \leq \max_i\sum_{j=1}^p|\sigma_{ij}|^\iota u_{ij}^{1-\iota}=Cs_1(\frac{logp}{n})^{\frac{1-\iota}{2}}
\end{align*}
Similarly, \[\max_j\sum_{i=1}^p|s_{u_{ij}}(\sigma_{ij})-\sigma_{ij}|\leq   Cs_1(\frac{\log p}{n})^{\frac{1-\iota}{2}},\] so we have 
\begin{equation}  \label{eq48}
    ||\boldsymbol{\Sigma}_y^*-\boldsymbol{\Sigma}_y||_2\leq Cs_1(\frac{\log p}{n})^{\frac{1-\iota}{2}}.
\end{equation}
According to $I_5$, $I_6$ , (\ref{eq48}) and Lemma \ref{le2}, the proof is completed.
\end{proof}

\begin{proof}[\textnormal{\textbf{Proof of Theorem 2}}]
According to Lemma \ref{le1} and \ref{le2}, when $\log p=o\{\min(\frac{n}{b_n^2},n^{\rho_3/(2-\rho_3)})\}$ with a bandwidth $b_n^2=o(n)$, for a sufficiently large $\delta$, define the event $E=\{\max_{ij}|\hat{\theta}_{ij}-\theta_{ij}|\leq C_{L}/2 \}$, we have 
\begin{equation} \label{eq2.2}
    \begin{split}
        P(\max_{ij}|\frac{\hat{\sigma}_{ij}-\sigma_{ij}}{\hat{\theta}^{\frac{1}{2}}_{ij}}|>\delta\sqrt{\frac{logp}{n}})&\leq P((\max_{ij}|\frac{\hat{\sigma}_{ij}-\sigma_{ij}}{\hat\theta_{ij}^\frac{1}{2}}|>\delta\sqrt{\frac{logp}{n}})\cap E)+P(E^c)\\&\leq P(\max_{ij}|\hat{\sigma}_{ij}-\sigma_{ij}|>C\delta\sqrt{\frac{logp}{n}})+o(1)=o(1)
    \end{split}
\end{equation}
We first consider the case $\sigma_{ij}=0$. According to (\ref{eq2.2}), when $\sigma_{ij}=0$, we have \[P(\max_{(i,j)\notin\Psi}|\hat{\sigma}_{ij}|<\lambda_{ij})\to1.\] 
Then we consider the case $\sigma_{ij}\neq0$. According to (\ref{eq2.2}), we have \[|\hat{\sigma}_{ij}|\geq |\sigma_{ij}|-|\hat{\sigma}_{ij}-\sigma_{ij}|\geq |\sigma_{ij}|-\lambda_{ij}.\]
Since $|\sigma_{ij}|>(2+\eta)\delta\sqrt{\frac{\theta_{ij}\log p}{n}}$, and according to Lemma \ref{le2}, we have $|\sigma_{ij}|\geq (2+\eta/2)\lambda_{ij}$ with probability tending to 1. So we uniformly have $|\hat{\sigma}_{ij}|>\lambda_{ij}$ in $(i,j)\in\Psi$ with probability tending to 1. 
    
Thus the result follows. This completes the proof.
\end{proof}

\begin{lemma}  \label{le5}
Suppose that $\{\boldsymbol{y}_t\}$ is a Gaussian $\mathrm{VAR}(1)$ process
\[
    \boldsymbol{y}_t=\boldsymbol{\Phi} y_{t-1}+\boldsymbol{\varepsilon}_t,\qquad
    \boldsymbol{\Phi}=\operatorname{diag}(\phi_1,\ldots,\phi_p),
\]
where $\max_{1\le j\le p}|\phi_j|\le r_0<1$ for some constant $r_0\in(0,1)$.
Assume that Assumptions 3--5 hold and that
\[
    \xi\log n\le \log p
    =o\left\{\min\left(\frac{n}{b_n^2},\,n^{\rho_3/(2-\rho_3)}\right)\right\},
    \qquad b_n^2=o(n),
\]
for some $\xi>0$, where $\rho_3^{-1}=4\gamma_1^{-1}+\gamma_2^{-1}$.
Assume moreover that
\[
    \max_{1\le i\le p}\#\{j\ne i:\sigma_{ij}\ne 0\}\le s_1-1,
    \qquad 8\leq s_1=O((\log p)^\gamma)
\]
for some $\gamma<1$. For each $i$, define
\[
    B_i=\{j\ne i:\sigma_{ij}=0\}.
\]
Let
\[
    \lambda_{ij}=\tau\sqrt{\frac{\hat\theta_{ij}\log p}{n}},
    \qquad 0<\tau<\sqrt{2}.
\]
Then, for any
\[
    0<\epsilon_0<\frac12\left(1-\frac{\tau^2}{2}\right),
\]
we have
\[
    P\left(
    \min_{1\le i\le p}
    \sum_{j\in B_i}
    I\{|\widehat\sigma_{ij}|\ge \lambda_{ij}\}
    >p^{\epsilon_0}
    \right)\to 1 .
\]
\end{lemma}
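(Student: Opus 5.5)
The plan is to show that, for $j\in B_i$, each event $\{|\widehat\sigma_{ij}|\ge\lambda_{ij}\}$ has probability at least $p^{-\tau^2/2-o(1)}$. I would then extract, inside each $B_i$, a large subfamily on which these events are conditionally independent, and finish with a Chernoff bound and a union bound over $i$.

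\emph{Step 1 (structure of zero entries).} Because $\boldsymbol\Phi$ is diagonal, the lag-$k$ cross-covariance is $\mathrm{Cov}(y_{i,t+k},y_{jt})=\phi_i^{k}\sigma_{ij}$ for $k\ge0$, and symmetrically for $k<0$. Hence $\sigma_{ij}=0$ implies that the Gaussian series $\{y_{it}\}$ and $\{y_{jt}\}$ are independent processes. Consider the graph on $\{1,\dots,p\}$ with edges $\{j,l\}$ whenever $\sigma_{jl}\neq0$; its maximal degree is at most $s_1-1$. A greedy independent-set argument then gives, for each $i$, a set $D_i\subset B_i$ with $|D_i|\ge (p-s_1)/s_1$ whose coordinate series are mutually independent. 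These series are also independent of $\{y_{it}\}$. Consequently, conditional on the path $\mathbf y_i=(y_{i1},\dots,y_{in})$, the statistics $\widehat\sigma_{ij}$, $j\in D_i$, are independent.

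\emph{Step 2 (conditional Gaussian tail).} Fix $j\in D_i$. Given $\mathbf y_i$, the uncentered statistic $n^{-1}\sum_t y_{it}y_{jt}$ is a linear functional of the Gaussian vector $\mathbf y_j$. It is therefore exactly $N(0,v_{ij})$ with $v_{ij}=n^{-2}\mathbf y_i'\Gamma_j\mathbf y_i$, where $\Gamma_j$ is the Toeplitz autocovariance matrix of $\{y_{jt}\}$. The expectation of $n v_{ij}$ equals $\theta_{ij}(1+o(1))$, since $\theta_{ij}=\sum_k\gamma_i(k)\gamma_j(k)$ here. By Hanson--Wright, using $\|\Gamma_i^{1/2}\Gamma_j\Gamma_i^{1/2}\|_2\le C$ because $|\phi|\le r_0$, I get $\max_{i,j}|nv_{ij}/\theta_{ij}-1|=O_p(\sqrt{\log p/n})=o_p(1)$.

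The centering terms $\bar y_i\bar y_j$ are $O_p(\log p/n)$ uniformly by Lemma \ref{le1}. By Lemma \ref{le2}, $\hat\theta_{ij}=\theta_{ij}(1+o_p(1))$ uniformly, and $\theta_{ij}\ge C_L$ by Assumption \ref{ass3}(ii). Fix $\tau<\tau'<\sqrt2$ with $\epsilon_0<\tfrac12(1-\tau'^2/2)$. On a high-probability event $\mathcal G$ (measurable with respect to $\mathbf y_i$ and the $\hat\theta$'s, or handled by intersecting), the event $\{|\widehat\sigma_{ij}|\ge\lambda_{ij}\}$ contains $\{|N(0,v_{ij})|\ge \tau'\sqrt{\theta_{ij}\log p/n}\}$. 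The Mills-ratio lower bound then gives conditional probability
\[
q\ \ge\ c\,(\log p)^{-1/2}p^{-\tau'^2/2}.
\]

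\emph{Step 3 (counting).} Conditional on $\mathbf y_i$, the count $\sum_{j\in D_i}I\{|\widehat\sigma_{ij}|\ge\lambda_{ij}\}$ dominates a Binomial$(|D_i|,q)$ variable. Its mean is at least $p^{1-\tau'^2/2-o(1)}$, because $s_1=O((\log p)^\gamma)$ and $\log n\lesssim\log p$ make all logarithmic factors $p^{o(1)}$. This mean exceeds $p^{2\epsilon_0}$. A multiplicative Chernoff bound gives failure probability at most $\exp(-cp^{1-\tau'^2/2-o(1)})$, which survives the union bound over the $p$ rows. Finally, $\sum_{j\in B_i}\ge\sum_{j\in D_i}$.

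\emph{Main obstacle.} The main difficulty is that the threshold uses the data-dependent $\hat\theta_{ij}$, which is not independent across $j$ given $\mathbf y_i$. I would handle this by replacing $\lambda_{ij}$ with the deterministic bound $\tau'\sqrt{\theta_{ij}\log p/n}$ on the event $\{\max_{ij}|\hat\theta_{ij}/\theta_{ij}-1|\le\varepsilon\}$, whose probability tends to one by Lemma \ref{le2}. Monotonicity of the indicator in the threshold then lets the Chernoff step run on deterministic thresholds. The uniform conditional-variance concentration in Step 2 is the other delicate point.
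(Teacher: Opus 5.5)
Your proposal is correct and follows essentially the same route as the paper's proof. Both arguments use a greedy independent set inside $B_i$ of size at least $(p-s_1)/s_1$, and independence of the coordinate series from the diagonal Gaussian $\mathrm{VAR}(1)$ structure. Both then use conditional Gaussianity of $\sum_t y_{it}y_{jt}$ given $\{y_{it}\}$, with conditional variance concentrating at $\theta_{ij}$, together with negligibility of $\bar y_i\bar y_j$ and $\hat\theta_{ij}-\theta_{ij}$, a Mills-ratio lower bound, and a union bound over rows.

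The differences are only technical. You control the conditional variance with Hanson--Wright, where the paper uses the Bernstein-type mixing inequality. You work with a fixed deterministic threshold, where the paper uses self-normalized events $C_{ij}$. You count hits with a Binomial--Chernoff bound, where the paper partitions the independent set into $\lfloor p^{2\epsilon_0}\rfloor$ blocks that each contain a hit; your counting in fact only needs $\epsilon_0<1-\tau^2/2$.
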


\begin{proof}[\textnormal{\textbf{Proof}}]
Fix $i\in\{1,\ldots,p\}$. Since each index is correlated with at most $s_1-1$
other indices, we have
\[
    |B_i|\ge p-s_1.
\]
Choose a constant $\alpha$ such that
\[
    2\epsilon_0+\frac{\tau^2}{2}<\alpha<1.
\]
We first construct a large subset of $B_i$ whose coordinates are pairwise uncorrelated.
Starting from $B_i$, choose one index and put it into $J_i$. Then delete this selected
index together with all indices correlated with it. Repeat this procedure until no index
remains. Since each selected index is correlated with at most $s_1-1$ other indices, each
step removes at most $s_1$ indices. Therefore
\[
    |J_i|
    \ge
    \frac{|B_i|}{s_1}
    \ge
    \frac{p-s_1}{s_1}.
\]
Because $s_1=O((\log p)^\gamma)$ with $\gamma<1$, for any fixed $\alpha<1$,
\[
    \frac{p-s_1}{s_1}\ge p^\alpha
\]
for all sufficiently large $p$. Hence
\[
    |J_i|\ge p^\alpha.
\]

By construction, the coordinates indexed by $J_i$ are pairwise uncorrelated. Since the
process is Gaussian, pairwise uncorrelatedness implies independence. Moreover, because
$j\in B_i$ implies $\sigma_{ij}=0$, the coordinate process $\{y_{jt}\}$ is independent of
$\{y_{it}\}$. Indeed, for the diagonal $\mathrm{VAR}(1)$ process,
\[
    \operatorname{Cov}(y_{it},y_{j,t+h})
    =
    \begin{cases}
    \phi_j^h\sigma_{ij}, & h\ge0,\\
    \phi_i^{-h}\sigma_{ij}, & h<0,
    \end{cases}
\]
and hence all cross-lag covariances vanish when $\sigma_{ij}=0$. Therefore, for
$j,k\in J_i$, $j\ne k$, the coordinate processes $\{y_{jt}\}$ and $\{y_{kt}\}$ are
independent, and they are also independent of $\{y_{it}\}$.

Partition $J_i$ into
\[
    N_p=\lfloor p^{2\epsilon_0}\rfloor
\]
disjoint subsets
\[
    H_{i,1},H_{i,2},\ldots,H_{i,N_p},
\]
such that
\[
    |H_{i,m}|\asymp p^{\alpha-2\epsilon_0},
    \qquad
    1\le m\le N_p.
\]

We next control the conditional variance of
\[
    \sum_{t=1}^n y_{it}y_{jt},
    \qquad j\in J_i.
\]
Since the data can be centered by subtracting their sample means, we assume without loss of generality that $\boldsymbol{\mu}=0$. By rescaling each coordinate, the event under consideration is invariant. Hence we may assume $\sigma_{ii}=1$ for $1\leq i \leq p$. For $j\in J_i$, conditionally on $\{y_{it}:1\le t\le n\}$, $\sum_{t=1}^n y_{it}y_{jt}$ is Gaussian with conditional mean zero and conditional variance
\[
    V_{ij,n}
    =
    \sum_{t=1}^n y_{it}^2
    +
    2\sum_{h=1}^{n-1}(\phi_j)^h
    \sum_{t=1}^{n-h}y_{it}y_{i,t+h}.
\]
Define
\[
    \hat\gamma_i(h)
    =
    \frac{1}{n-h}
    \sum_{t=1}^{n-h}y_{it}y_{i,t+h},
    \qquad
    \gamma_i(h)=E(y_{it}y_{i,t+h}).
\]
Then
\[
    \frac{V_{ij,n}}{n}
    =
    \hat\gamma_i(0)
    +
    2\sum_{h=1}^{n-1}\frac{n-h}{n}\phi_j^h\hat\gamma_i(h).
\]
For $j\in B_i$, since $\sigma_{ij}=0$ and $\sigma_{jj}=1$, the long-run variance is
\[
    \theta_{ij}
    =
    \gamma_i(0)
    +
    2\sum_{h=1}^{n-1}\frac{n-h}{n}\phi_j^h\gamma_i(h).
\]
Hence
\[
\begin{aligned}
    \left|
    \frac{V_{ij,n}}{n}-\theta_{ij}
    \right|
    &\le
    |\widehat\gamma_i(0)-\gamma_i(0)|  \\
    &\quad
    +2\sum_{h=1}^{n-1}\frac{n-h}{n}|\phi_j|^h
    |\widehat\gamma_i(h)-\gamma_i(h)|.
\end{aligned}
\]
Therefore,
\[
\begin{aligned}
    \max_{i,j}
    \left|
    \frac{V_{ij,n}}{n}-\theta_{ij}
    \right|
    &\le
    \max_i|\widehat\gamma_i(0)-\gamma_i(0)| \\
    &\quad
    +2\max_i\max_{1\le h\le n-1}
    \frac{1}{n}|\sum_{t=1}^{n-h}
    \{y_{it}y_{i,t+h}-E(y_{it}y_{i,t+h})\}|
    \sum_{h=1}^{n-1}r_0^h.
\end{aligned}
\]
Since $\sum_{h=1}^{n-1}r_0^h\le r_0/(1-r_0)<\infty$, by Assumptions \ref{ass3}(i) and \ref{ass4}, together with the Bernstein-type inequality of
\citet{2011Merlevede}, for some constants $C,c>0$,
\[
    P\left(
    \left|
    \sum_{t=1}^{n-h}
    \{y_{it}y_{i,t+h}-E(y_{it}y_{i,t+h})\}
    \right|
    >
    nx
    \right)
    \le
    Cn\exp\{-c(nx)^{\rho_1}\}
    +
    C\exp\{-cnx^2\}.
\]
By the union bound over $1\le i\le p$ and $1\le h\le n-1$,
\[
\begin{aligned}
    &P\left(
    \max_i\max_{1\le h\le n-1}
    \left|
    \frac1n
    \sum_{t=1}^{n-h}
    \{y_{it}y_{i,t+h}-E(y_{it}y_{i,t+h})\}
    \right|
    >
    x
    \right)  \\
    &\qquad
    \le
    Cnp
    \left[
    n\exp\{-c(nx)^{\rho_1}\}
    +
    \exp\{-cnx^2\}
    \right].
\end{aligned}
\]
Consequently,
\[
    \max_{i,j}
    \left|
    \frac{V_{ij,n}}{n}-\theta_{ij}
    \right|
    =
    o_p(1).
\]
Moreover, by Lemma \ref{le2},
\[
    \max_{i,j}|\hat\theta_{ij}-\theta_{ij}|=o_p(1).
\]
According to Lemma \ref{le1}, $\max_{i,j}|\bar y_i\bar y_j|=O_p\left(\frac{\log p}{n}\right)$.
Choose $\varepsilon>0$ sufficiently small so that
\[
    \alpha-2\epsilon_0-\frac{(\tau+2\varepsilon)^2}{2}>0.
\]
For $j\in J_i$, define
\[
    C_{ij}
    =
    \left\{
    \frac{|\sum_{t=1}^n y_{it}y_{jt}|}{\sqrt{V_{ij,n}}}
    \ge
    (\tau+\varepsilon)\sqrt{\log p}
    \right\}.
\]
On the event
\[
    \max_{i,j}|\widehat\theta_{ij}-\theta_{ij}|=o(1),
    \qquad
    \max_{i,j}
    \left|
    \frac{V_{ij,n}}{n}-\theta_{ij}
    \right|=o(1),
\]
and using Assumption \ref{ass3}(ii), we have uniformly in $i,j$,
\[
    \sqrt{\frac{V_{ij,n}}{n}}
    =
    (1+o(1))\sqrt{\theta_{ij}},
    \qquad
    \sqrt{\hat\theta_{ij}}
    =
    (1+o(1))\sqrt{\theta_{ij}}.
\]
Therefore, for all sufficiently large $n$, the event $C_{ij}$ implies
\[
    \frac1n|\sum_{t=1}^n y_{it}y_{jt}|
    \ge
    (\tau+\varepsilon/2)
    \sqrt{\frac{\hat\theta_{ij}\log p}{n}}.
\]
Since
\[
    \max_{ij}|\bar y_i\bar y_j|
    =O_p(\frac{\log p}{n})=
    o_p\left(\sqrt{\frac{\log p}{n}}\right),
\]
we further obtain
\[
    |\hat\sigma_{ij}|
    =
    \left|
    \frac1n\sum_{t=1}^n y_{it}y_{jt}
    -
    \bar y_i\bar y_j
    \right|
    \ge
    \tau
    \sqrt{\frac{\hat\theta_{ij}\log p}{n}}
    =
    \lambda_{ij}
\]
with probability tending to one. Hence it is enough to show that, with probability
tending to one, every block $H_{i,m}$ contains at least one index $j$ for which $C_{ij}$
occurs.

Now fix $i$ and $m$. Conditionally on $\{y_{it}:1\le t\le n\}$, the random variables
\[
    \left\{
    \sum_{t=1}^n y_{it}y_{jt}:j\in H_{i,m}
    \right\}
\]
are independent Gaussian random variables with variances $V_{ij,n}$. Thus
\[
    P(C_{ij}\mid \{y_{it}\}_{t=1}^n)
    =
    P\{|Z|\ge(\tau+\varepsilon)\sqrt{\log p}\},
    \qquad Z\sim N(0,1).
\]
Defined the event $E=\{\max_{ij}|\hat\theta_{ij}-\theta_{ij}|\leq \varepsilon\}$, according to Lemma \ref{le2}, we have $P(E^c)=o(1)$. Using the standard lower bound for the normal tail,
\[
    P\{|Z|\ge x\}
    \ge
    c x^{-1}\exp(-x^2/2),
    \qquad x>1,
\]
we obtain, for all sufficiently large $p$,
\[
    P(C_{ij}\mid \{y_{it}\}_{t=1}^n)
    \ge
    c p^{-(\tau+2\varepsilon)^2/2}.
\]
Therefore,
\[
\begin{aligned}
    P\left(
    \sum_{j\in H_{i,m}}I(C_{ij})=0
    \,\middle|\,
    \{y_{it}\}_{t=1}^n
    \right)
    &\le
    \left(
    1-cp^{-(\tau+2\varepsilon)^2/2}
    \right)^{|H_{i,m}|}  \\
    &\le
    \exp\left\{
    -c|H_{i,m}|p^{-(\tau+2\varepsilon)^2/2}
    \right\}.
\end{aligned}
\]
Since
\[
    |H_{i,m}|\asymp p^{\alpha-2\epsilon_0},
\]
we have
\[
    P\left(
    \sum_{j\in H_{i,m}}I(C_{ij})=0
    \right)
    \le
    \exp\left\{
    -c p^{\alpha-2\epsilon_0-(\tau+2\varepsilon)^2/2}
    \right\}.
\]
By the choice of $\varepsilon$,
\[
    \alpha-2\epsilon_0-\frac{(\tau+2\varepsilon)^2}{2}>0,
\]
so
\[
    P\left(
    \sum_{j\in H_{i,m}}I(C_{ij})=0
    \right)
    =
    o(p^{-1-2\epsilon_0})
\]
uniformly over $i$ and $m$.

Finally, applying the union bound over $1\le i\le p$ and
$1\le m\le N_p$, we get
\[
\begin{aligned}
    P\left(
    \exists\, i,m:
    \sum_{j\in H_{i,m}}I(C_{ij})=0
    \right)
    &\le
    pN_p\,o(p^{-1-2\epsilon_0})  \\
    &=
    p\lfloor p^{2\epsilon_0}\rfloor o(p^{-1-2\epsilon_0})  \\
    &=o(1).
\end{aligned}
\]
Thus, with probability tending to one, for every $i$ and every block $H_{i,m}$, there
exists at least one $j\in H_{i,m}$ such that $C_{ij}$ occurs, and hence $|\widehat\sigma_{ij}|\ge \lambda_{ij}$. Therefore,
\[\sum_{j\in B_i}I\{|\widehat\sigma_{ij}|\ge\lambda_{ij}\}\ge N_p=\lfloor p^{2\epsilon_0}\rfloor>p^{\epsilon_0}\]
for all sufficiently large $p$, uniformly over $1\le i\le p$. Hence
\[P\left(\min_{1\le i\le p}\sum_{j\in B_i}I\{|\hat\sigma_{ij}|\ge\lambda_{ij}\}>p^{\epsilon_0}\right)\to 1.\]
This completes the proof.
\end{proof}

\begin{proof}[\textnormal{\textbf{Proof of Theorem 3}}]
Let
\[
\theta_0:=\frac{1+\rho^2}{1-\rho^2},
\qquad
\delta_0:=\frac12\sqrt{\theta_0}.
\]
Choose a fixed constant \(\rho\in(0,1)\) sufficiently close to one, and then choose \(c_a>0\) such that
\[
(2+\eta)\delta_* <c_a<\delta_0/2-M,
\]
where \(M>0\) is the constant in the Gaussian maximal deviation bound
\begin{equation}\label{80}
    P\!\left(
\max_{1\le i\neq j\le s_0}
|\hat\sigma_{ij}-\sigma_{ij}|
\le
M\sqrt{\frac{\log p}{n}}
\right)\to1,
\end{equation}
according to Lemma \ref{le1}.
By the choice \(c_a>(2+\eta)\delta_*\), the signal strength condition of Theorem \ref{the2}
holds for the proposed estimator on the first block, and hence the proposed estimator
still satisfies \(P(\mathrm{TPR}=1)\to1\) and \(P(\mathrm{FPR}=0)\to1\).

We now show that the universal thresholding estimator fails.

\textbf{Case 1: \(\delta\le\delta_0\).}
Fix the last row \(i=p\), and let
\[
J_p=\{s_0+1,\ldots,p-1\}.
\]
For all \(j\in J_p\), we have \(\sigma_{pj}=0\), since the second block is diagonal.
Moreover,
\[
\theta_{pj}=\theta_0,
\qquad j\in J_p.
\]
By Lemma \ref{le2},
\[
\max_{j\in J_p}|\hat\theta_{pj}-\theta_0|=o_p(1),
\]
so with probability tending to one,
\[
\hat\theta_{pj}\ge \frac12\theta_0,
\qquad j\in J_p.
\]
Hence
\[
\lambda_g=\delta\sqrt{\frac{\log p}{n}}
\le
\delta_0\sqrt{\frac{\log p}{n}}
=
\frac12\sqrt{\theta_0\frac{\log p}{n}}
\le
\frac{1}{\sqrt2}\sqrt{\hat\theta_{pj}\frac{\log p}{n}},
\qquad j\in J_p,
\]
with probability tending to one.
Now apply Lemma \ref{le5} with \(\tau=1/\sqrt2\). Since \(|J_p|\ge p/2\), Lemma \ref{le5} implies that,
with probability tending to one, at least one zero entry in the \(p\)-th row exceeds the
universal threshold \(\lambda_g\). Therefore
\[
P\bigl(\mathrm{FPR}(\widehat{\boldsymbol{\Sigma}}_y^{\,g})>0\bigr)\to1.
\]

\textbf{Case 2: \(\delta>\delta_0\).}
For the first block, \(1\le i\neq j\le s_0\), we have
\[
\sigma_{ij}=a_n=c_a\sqrt{\frac{\log p}{n}}.
\]
By (\ref{80}), with probability tending to one,
\[
|\hat\sigma_{ij}|
\le
(c_a+M)\sqrt{\frac{\log p}{n}},
\qquad
1\le i\neq j\le s_0.
\]
Since \(\delta>\delta_0\) and \(c_a+M<\delta_0\), it follows that
\[
\lambda_g=\delta\sqrt{\frac{\log p}{n}}
>
(c_a+M)\sqrt{\frac{\log p}{n}}
\]
for all sufficiently large \(n\). Hence all truly nonzero off-diagonal entries in the first
block are thresholded to zero, and therefore
\[
P\bigl(\mathrm{TPR}(\widehat{\boldsymbol{\Sigma}}_y^{g})<1\bigr)\to1.
\]

Combining the two cases yields
\[
\inf_{\delta>0}\sup_{\boldsymbol{\Sigma}_y\in U_{\iota}(s_1)}
P\Bigl(
\mathrm{TPR}(\widehat{\boldsymbol{\Sigma}}_y^{g})<1
\ \text{or}\
\mathrm{FPR}(\widehat{\boldsymbol{\Sigma}}_y^{g})>0
\Bigr)\to1.
\]
This completes the proof.
\end{proof}

\begin{proof}[\textnormal{\textbf{Proof of Theorem 4}}]
We use the same construction and the same constants \(\rho\), \(c_a\), \(\theta_0\), and
\(\delta_0\) as in the proof of Theorem \ref{the3}. We now consider the adaptive thresholding estimator of \citet{2011Cai}.

\textbf{Case 1: \(\delta\le \delta_0\).}
Fix \(i=p\) and let
\[
J_p=\{s_0+1,\ldots,p-1\}.
\]
For each \(j\in J_p\), we have \(\sigma_{pj}=0\). Since the second block is diagonal with
common marginal variance \(1\),
\[
\theta_{pj}=\frac{1+\rho^2}{1-\rho^2}=\theta_0,
\qquad
\theta^c_{pj}=\mathrm{Var}(y_{pt}y_{jt})=1.
\]
Moreover, it follows from the same argument as in Lemma \ref{le1} that
\[
\max_{j\in J_p}|\hat\theta^c_{pj}-1|=o_p(1),
\qquad
\max_{j\in J_p}|\hat\theta_{pj}-\theta_0|=o_p(1).
\]
Hence, with probability tending to one,
\[
\hat\theta^c_{pj}\le 2,
\qquad
\hat\theta_{pj}\ge \frac12\theta_0,
\qquad j\in J_p.
\]
Therefore
\[
\lambda^c_{pj}
=
\delta\sqrt{\hat\theta^c_{pj}\frac{\log p}{n}}
\le
\delta_0\sqrt{2\frac{\log p}{n}}
\le
\sqrt{\hat\theta_{pj}\frac{\log p}{n}},
\]
after choosing \(\rho\) sufficiently close to one so that \(\theta_0\) is sufficiently large
relative to \(1\).
Now apply Lemma \ref{le5} with \(\tau=1\). Then, with probability tending to one, at least one
zero entry in the \(p\)-th row is retained by the adaptive thresholding estimator. Hence
\[
P\bigl(\mathrm{FPR}(\boldsymbol{\widehat\Sigma}_y^{c})>0\bigr)\to1.
\]

\textbf{Case 2: \(\delta>\delta_0\).}
For \(1\le i\neq j\le s_0\), we have
\[
\sigma_{ij}=a_n=c_a\sqrt{\frac{\log p}{n}}.
\]
Since the first block is white noise,
\[
\theta^c_{ij}=\mathrm{Var}(y_{it}y_{jt})=1+a_n^2,
\]
and according to Lemma 2 in \citet{2011Cai}, $P(\max_{ij}|\hat{\theta}_{ij}^c-\theta_{ij}^c|>\varepsilon)=O(s_0^{-C})$, so
\[
\lambda^c_{ij}
=
\delta\sqrt{\hat\theta^c_{ij}\frac{\log p}{n}}
\ge
\delta/2\sqrt{\frac{\log p}{n}}
>
(c_a+M)\sqrt{\frac{\log p}{n}}
\]
for all sufficiently large \(n\). By the same maximal deviation argument as in Theorem \ref{the3},
\[
|\hat\sigma_{ij}|
\le
(c_a+M)\sqrt{\frac{\log p}{n}}
<
\lambda^c_{ij},
\qquad
1\le i\neq j\le s_0,
\]
with probability tending to one. Hence all truly nonzero off-diagonal entries in the first
block are thresholded to zero, and thus
\[
P\bigl(\mathrm{TPR}(\widehat{\boldsymbol{\Sigma}}_y^{c})<1\bigr)\to1.
\]

Combining the two cases yields
\[
\inf_{\delta>0}\sup_{\boldsymbol{\Sigma}_y\in U_{\iota}(s_1)}
P\Bigl(
\mathrm{TPR}(\widehat{\boldsymbol{\Sigma}}_y^{c})<1
\ \text{or}\
\mathrm{FPR}(\widehat{\boldsymbol{\Sigma}}_y^{c})>0
\Bigr)\to1.
\]
This completes the proof.
\end{proof}

\end{document}